\documentclass[submission,copyright,creativecommons]{eptcs}

\usepackage{iftex}
\usepackage{amsthm}
\newtheorem{theorem}{Theorem}
\newtheorem{definition}{Definition}

\newtheorem{lemma}{Lemma}
\newtheorem{remark}{Remark}
\newtheorem{corollary}{Corollary}

\usepackage{latexsym,relsize}
\usepackage{amsmath,amsfonts,amssymb}
\usepackage{mathtools}
\usepackage{algorithm}
\usepackage{algpseudocode}
\usepackage{setspace}
\usepackage{multicol}
\usepackage{longtable}
\usepackage{comment}
\usepackage{color}
\usepackage{bussproofs}
\EnableBpAbbreviations

\usepackage{tikz}
\usepackage{colonequals}

\usepackage{multirow}

\usepackage{multicol}

\usepackage{array}
\newcolumntype{C}[1]{>{\centering\arraybackslash}p{#1}}
\newcolumntype{L}[1]{>{\arraybackslash}p{#1}}

\def\mc{\multicolumn}

\newcommand{\fns}{\footnotesize}

\renewcommand{\emph}{\textbf}

\newcommand{\Prop}{\mathsf{Prop}}

\newcommand{\val}[1]{[\![{#1}]\!]}
\newcommand{\descr}[1]{(\![{#1}]\!)}
\renewcommand{\phi}{\varphi}





%


\newcommand{\Diamondblack}{\blacklozenge}

\newcommand{\fakeparagraph}[1]{

\textit{#1} \ \ }

\newcommand{\ceq}{\colonequals}

\ifpdf
  \usepackage{underscore}         
  \usepackage[T1]{fontenc}        
\else
  \usepackage{breakurl}           
\fi

\title{Query Answering in Lattice-based Description Logic}
\author{Krishna Manoorkar
\institute{School of Business and Economics\\
Vrije Universiteit Amsterdam\thanks{Krishna Manoorkar is supported by the NWO grant KIVI.2019.001 awarded to Alessandra Palmigiano.}\\
Amsterdam, the Netherlands}
\email{k.b.manoorkar@vu.nl}
\and
 Ruoding Wang
\institute{School of Business and Economics\\Vrije Universiteit Amsterdam\thanks{Ruoding Wang is supported by the China Scholarship Council No.202206310072.}\\
Amsterdam, the Netherlands}
\institute{Department of Philosophy\\Xiamen University\\ Xiamen, China}
\email{r.wang2@vu.nl}
}

\begin{document}
\maketitle

\begin{abstract}
Recently, the description logic LE-$\mathcal{ALC}$ was introduced for reasoning in the semantic environment of the enriched formal contexts, and a tableaux algorithm was developed for checking the consistency of ABoxes in this logic \cite{van2023old,van2023non}. In this paper, we study the ontology-mediated query answering in LE-$\mathcal{ALC}$. In particular, we show that several different types of queries can be answered efficiently for LE-$\mathcal{ALC}$  knowledge bases with acyclic TBoxes using our tableaux algorithm directly or by extending it with some additional rules. 
\end{abstract}

\section{Introduction}\label{Sec:Introduction}
Description logic (DL) \cite{DLhandbook} is a class of logical formalisms, rooted in classical first-order logic, widely used in Knowledge Representation and Reasoning to articulate and infer relationships among pertinent concepts within a specified application domain. It is widely utilized across various fields such as the semantic web \cite{horrocks2002daml+oil,baader2005description}, ontologies \cite{staab2010handbook}, and software engineering \cite{berardi2005reasoning}. Description logic offers solutions to diverse reasoning tasks arising from a knowledge base. Among the notable reasoning services offered by description logic is ontology-mediated query answering, which involves answering queries based on a given knowledge base \cite[chpater 7]{DLhandbook}. 

In \cite{van2023non}\footnote{We noticed a  mistake in the proof of termination and $I$-compatibility in an earlier version of this paper \cite{van2023old} in which concepts $\top$ and $\bot$ were included as concept names. In the updated version \cite{van2023non} we prove that the result holds in the restriction which does not contain $\top$ and $\bot$ in the language of concept names. In this paper, we work with the restricted language as in \cite{van2023non}.}, a two-sorted {\em lattice-based description logic} LE-$\mathcal{ALC}$\footnote{Even though concept names in  LE-$\mathcal{ALC}$ do not contain negation, we still refer to this description logic as LE-$\mathcal{ALC}$ rather than LE-$\mathcal{ALE}$, as negation on ABox terms is included in the description logic language.} was introduced based on non-distributive modal logic, with semantics grounded in an enriched formal context~\cite{conradie2016categories,conradie2017toward}. LE-$\mathcal{ALC}$ provides a natural description logic to reason about formal concepts (or categories) arising from formal contexts in Formal Concept Analysis (FCA) \cite{ganter1997applied,ganter2012formal}. The logic LE-$\mathcal{ALC}$ has the same relationship with non-distributive modal logic and its semantics based on formal contexts as the relationship between $\mathcal{ALC}$ and the classical normal modal logic with its Kripke frame semantics. Namely, LE-$\mathcal{ALC}$  facilitates the description of {\em enriched formal contexts}, i.e., formal contexts endowed with additional relations, which give rise to concept lattices extended with normal modal operators. Similarly to the classical modal operators, the `non-distributive' modal operators can be given different interpretations, such as the epistemic operator \cite{conradie2017toward} and the approximation operator \cite{conradie2021rough}. 

In this paper, we adapt and modify the LE-$\mathcal{ALC}$ tableaux algorithm provided in \cite{van2023non} to answer several different types of queries based on LE-$\mathcal{ALC}$  knowledge bases with acyclic TBoxes. We show that for any consistent LE-$\mathcal{ALC}$ ABox $\mathcal{A}$, the model constructed from the tableaux completion of $\mathcal{A}$ is a universal or canonical model for answering different queries like {\em relationship queries} asking if an object and a feature are related, {\em membership queries} asking if an object or a feature belongs to a concept, and {\em subsumption queries} asking if a concept is included in some other concept. This allows us to answer multiple such queries in polynomial time in $|\mathcal{A}|$. We show that it also acts as a universal model w.r.t.~negative relational queries, however this is not true for negative membership or subsumption queries.

Finally, we consider separation queries which ask if two objects or features can be distinguished from each other by means of some role (relation). We convert these queries into an equivalent problem of checking the consistency of the given ABox w.r.t.~some extension of LE-$\mathcal{ALC}$ and providing a tableaux algorithm for such extension. This method allows to answer separation queries of different types in polynomial time in $|\mathcal{A}|$. 
 
{\em Structure of the paper.} In Section~\ref{Sec:Preliminaries}, we briefly review non-distributive modal logic and polarity-based semantics, lattice-based description logic LE-$\mathcal{ALC}$, and the tableaux algorithm for checking its ABox consistency. In Section~\ref{sec:query answering}, we demonstrate that the model obtained from the Tableaux Algorithm (Section~\ref{Sec:Preliminaries}) is a universal model for various queries, and we define different types of queries and corresponding algorithms. Section~\ref{sec:examples} provides a specific LE-$\mathcal{ALC}$ knowledge base and illustrates how the algorithms answer the queries discussed earlier. Finally, Section~\ref{Sec: conclusions} summarizes the paper and outlines future directions.

\section{Preliminaries}\label{Sec:Preliminaries}
In this section, we collect preliminaries on non-distributive modal logic and its polarity-based semantics, i.e. semantics based on formal contexts, and the lattice-based description logic LE-$\mathcal{ALC}$ with the tableaux algorithm developed for it in \cite{van2023non}. 

\subsection{Basic non-distributive modal logic and its polarity-based semantics}\label{ssec:LE-logic}

In this section, we briefly introduce the basic non-distributive modal logic and polarity-based semantics for it. It is a member of a family of lattice-based logics, sometimes referred to as {\em LE-logics} (cf.~\cite{conradie2019algorithmic}), which have been studied in the context of a research program on the logical foundations of categorization theory \cite{conradie2016categories,conradie2017toward,conradie2021rough,conradie2020non}. 
Let $\Prop$ be a (countable) set of atomic propositions. The language $\mathcal{L}$ is defined as follows:

{{\centering
  $\varphi \ceq \bot \mid \top \mid p \mid  \varphi \wedge \varphi \mid \varphi \vee \varphi \mid \Box \varphi \mid \Diamond \varphi$,  
\par}}

\noindent where $p\in \Prop$. 
The {\em basic}, or {\em minimal normal} $\mathcal{L}$-{\em logic} is a set $\mathbf{L}$ of sequents $\phi\vdash\psi$, with $\phi,\psi\in\mathcal{L}$, containing the following axioms:
\smallskip

{{\centering
\begin{tabular}{ccccccccccccc}
     $p \vdash p$ & \qquad & $\bot \vdash p$ & \qquad & $p \vdash p \vee q$ & \qquad & $p \wedge q \vdash p$ & \qquad & $\top \vdash \Box\top$ & \qquad & $\Box p \wedge \Box q \vdash \Box(p \wedge q)$
     \\
     & \qquad & $p \vdash \top$ & \qquad & $q \vdash p \vee q$ & \qquad & $p \wedge q \vdash q$ &\qquad &  $\Diamond\bot \vdash \bot$ & \qquad & $\Diamond(p \vee q) \vdash \Diamond p \vee \Diamond q$\\
\end{tabular}
\par}}
\smallskip
\noindent 
and closed under the following inference rules:
		{\small{
		\begin{gather*}
			\frac{\phi\vdash \chi\quad \chi\vdash \psi}{\phi\vdash \psi}
			\ \ 
			\frac{\phi\vdash \psi}{\phi\left(\chi/p\right)\vdash\psi\left(\chi/p\right)}
			\ \ 
			\frac{\chi\vdash\phi\quad \chi\vdash\psi}{\chi\vdash \phi\wedge\psi}
			\ \ 
			\frac{\phi\vdash\chi\quad \psi\vdash\chi}{\phi\vee\psi\vdash\chi}
\ \ 
			\frac{\phi\vdash\psi}{\Box \phi\vdash \Box \psi}
\ \ 
\frac{\phi\vdash\psi}{\Diamond \phi\vdash \Diamond \psi}
\end{gather*}}}

\noindent In the following part, we define the polarity-based semantics for this logic.

\fakeparagraph{Relational semantics.}
\label{sssec:relsem} 
The following preliminaries are taken from \cite{conradie2021rough,conradie2020non}.
For any binary relation $T\subseteq U\times V$, and any $U'\subseteq U$  and $V'\subseteq V$,  we let

{{
\centering
$T^{(1)}[U']\ceq\{v\mid \forall u(u\in U'\Rightarrow uTv) \}  \quad\quad T^{(0)}[V']\ceq\{u\mid \forall v(v\in V'\Rightarrow uTv)\}. $
\par
}}

\noindent For any $u \in U$ (resp.~$v \in V$) we will write $T^{(1)}[u]$ (resp.~$T^{(0)}[v]$) in place of $T^{(1)}[\{u\}]$ (resp.~$T^{(0)}[\{v\}]$).

A {\em polarity} or {\em formal context} (cf.~\cite{ganter2012formal}) is a tuple $\mathbb{P} =(A,X,I)$, where $A$ and $X$ are sets, and $I \subseteq A \times X$ is a binary relation. 
$A$ and $X$ can be understood as the collections of {\em objects} and {\em features}, and for any $a\in A$ and $x\in X$, $aIx$ exactly when the object $a$ has the feature $x$. For any polarity $\mathbb{P} = (A, X, I)$, the pair of maps 

{{\centering
 $(\cdot)^\uparrow: \mathcal{P}(A)\to \mathcal{P}(X)$ and $(\cdot)^\downarrow: \mathcal{P}(X)\to \mathcal{P}(A),$
\par}}
\smallskip 

\noindent defined by $B^\uparrow \ceq I^{(1)}[B]$ and $Y^\downarrow \ceq I^{(0)}[Y]$ where $B\subseteq A$ and $Y\subseteq X$, forms a Galois connection, and hence induces the closure operators$(\cdot)^{\uparrow\downarrow}$ and $(\cdot)^{\downarrow\uparrow}$ on $\mathcal{P}(A)$ and on $\mathcal{P}(X)$, respectively. Again, we will write $a^\uparrow$ and $a^{\uparrow\downarrow}$ (resp.~$x^\downarrow$ and $x^{\downarrow\uparrow}$) in place of $\{a\}^\uparrow$ and $\{a\}^{\uparrow\downarrow}$ (resp.~$\{x\}^\downarrow$ and $\{x\}^{\downarrow\uparrow}$).


A {\em formal concept} of a polarity $\mathbb{P}=(A,X,I)$ is a tuple $c=(\val{c},\descr{c})$ such that $\val{c}\subseteq A$ and $\descr{c}\subseteq X$, and $\val{c} = \descr{c}^\downarrow$ and $\descr{c} = \val{c}^\uparrow$, i.e. the sets $\val{c}$ and $\descr{c}$ are  Galois-stable. The set of formal concepts of polarity $\mathbb{P}$, with the order defined by

{{\centering
$c_1 \leq c_2 \quad \text{iff} \quad \val{c_1} \subseteq \val{c_2} \quad \text{iff} \quad \descr{c_2} \subseteq \descr{c_1}$,
\par
}}

\noindent forms a complete lattice $\mathbb{P}^+$, namely the {\em concept lattice} of $\mathbb{P}$. 

An {\em enriched formal context} is a tuple $\mathbb{F} =(\mathbb{P}, R_\Box, R_\Diamond)$, where $R_\Box  \subseteq A \times X$ and $ R_\Diamond \subseteq X \times A$ are {\em $I$-compatible} relations, that is, for all $a \in A$ and $x \in X$, the sets $R_\Box^{(0)}[x]$, $R_\Box^{(1)}[a]$, $R_\Diamond^{(0)}[a]$, $R_\Diamond^{(1)}[x]$ are Galois-stable in $\mathbb{P}$. 
Given the operations $[R_\Box]$ and $\langle R_\Diamond\rangle$ on $\mathbb{P}^+$ corresponding to $R_\Box$ and $R_\Diamond$, respectively, we have for any $c \in \mathbb{P}^+$,

{{
\centering
      $[R_\Box] c =(R_\Box^{(0)}[\descr{c}], I^{(1)}[R_\Box^{(0)}[\descr{c}]]) \quad \text{and} \quad \langle R_\Diamond\rangle  c =( I^{(0)}[R_\Diamond^{(0)}[\val{c}]], R_\Diamond^{(0)}[\val{c}]).$
\par
}}

\noindent We refer to the algebra $\mathbb{F}^+=(\mathbb{P}^+, [R_\Box], \langle R_\Diamond \rangle)$ as the {\em complex algebra} of  $\mathbb{F}$. A {\em valuation} on such an $\mathbb{F}$
is a map $V\colon\Prop\to \mathbb{P}^+$. For each $p\in \Prop$, we let $\val{p} \ceq \val{V(p)}$ (resp.~$\descr{p}\ceq \descr{V(p)}$) denote the extension (resp. intension) of the interpretation of $p$ under $V$.

A {\em model} is a tuple $\mathbb{M} = (\mathbb{F}, V)$, where $\mathbb{F} = (\mathbb{P}, R_{\Box}, R_{\Diamond})$ is an enriched formal context and $V$ is a valuation of $\mathbb{F}$. For every $\phi\in \mathcal{L}$, we let $\val{\phi}_\mathbb{M} \ceq \val{V(\phi)}$ (resp.~$\descr{\phi}_\mathbb{M}\ceq \descr{V(\phi)}$) denote the extension (resp. intension) of the interpretation of $\phi$ under the homomorphic extension of $V$.   The `satisfaction' and `co-satisfaction' relations $\Vdash$ and $\succ$  can be recursively defined as follows: 
\smallskip

{{\centering 
\begin{tabular}{l@{\hspace{1em}}l@{\hspace{2em}}l@{\hspace{1em}}l}
$\mathbb{M}, a \Vdash p$ & iff $a\in \val{p}_{\mathbb{M}}$ &
$\mathbb{M}, x \succ p$ & iff $x\in \descr{p}_{\mathbb{M}}$ \\
$\mathbb{M}, a \Vdash\top$ & always &
$\mathbb{M}, x \succ \top$ & iff   $a I x$ for all $a\in A$\\
$\mathbb{M}, x \succ  \bot$ & always &
$\mathbb{M}, a \Vdash \bot $ & iff $a I x$ for all $x\in X$\\
$\mathbb{M}, a \Vdash \phi\wedge \psi$ & iff $\mathbb{M}, a \Vdash \phi$ and $\mathbb{M}, a \Vdash  \psi$ & 
$\mathbb{M}, x \succ \phi\wedge \psi$ & iff $(\forall a\in A)$ $(\mathbb{M}, a \Vdash \phi\wedge \psi \Rightarrow a I x)$
\\
$\mathbb{M}, x \succ \phi\vee \psi$ & iff  $\mathbb{M}, x \succ \phi$ and $\mathbb{M}, x \succ  \psi$ & $\mathbb{M}, a \Vdash \phi\vee \psi$ & iff $(\forall x\in X)$ $(\mathbb{M}, x \succ \phi\vee \psi \Rightarrow a I x)$.
\end{tabular}
\par}}
\smallskip

\noindent As to the interpretation of modal operators: 
\smallskip

{{\centering
\begin{tabular}{llcll}
$\mathbb{M}, a \Vdash \Box\phi$ &  iff $(\forall x\in X)(\mathbb{M}, x \succ \phi \Rightarrow a R_\Box x)$ &&
$\mathbb{M}, x \succ \Box\phi$ &  iff $(\forall a\in A)(\mathbb{M}, a \Vdash \Box\phi \Rightarrow a I x)$\\

$\mathbb{M}, x \succ \Diamond\phi$ &  iff $(\forall a\in A) (\mathbb{M}, a \Vdash \phi \Rightarrow x R_\Diamond a)$ &&
$\mathbb{M}, a \Vdash \Diamond\phi$ & iff $(\forall x\in X)(\mathbb{M}, x \succ \Diamond\phi \Rightarrow a I x)$.  \\

\end{tabular}
\par}}
\smallskip

\noindent The definition above ensures that, for any $\mathcal{L}$-formula $\varphi$,

{{
\centering
$\mathbb{M}, a \Vdash \phi$ \quad iff \quad $a\in \val{\phi}_{\mathbb{M}}$, \qquad  and \qquad$\mathbb{M},x \succ \phi$ \quad iff \quad $x\in \descr{\phi}_{\mathbb{M}}$. \par}}

{{
\centering
$\mathbb{M}\models \phi\vdash \psi$ \quad iff \quad $\val{\phi}_{\mathbb{M}}\subseteq \val{\psi}_{\mathbb{M}}$\quad  iff  \quad  $\descr{\psi}_{\mathbb{M}}\subseteq \descr{\phi}_{\mathbb{M}}$. 
\par}}

The interpretation of the propositional connectives $\vee$ and $\wedge$ in the framework described above reproduces the standard notion of join and meet of formal concepts used in FCA. The interpretation of operators $\Box$ and $\Diamond$ is motivated by algebraic properties and duality theory for modal operators on lattices (see~\cite[Section 3]{conradie2020non} for an expanded discussion). 
In \cite[Proposition 3.7]{conradie2021rough}, it is shown that the semantics of LE-logics is compatible with Kripke semantics for classical modal logic, and thus, LE-logics are indeed generalizations of classical modal logic.
This interpretation is further justified in \cite[Section 4]{conradie2021rough} by noticing  that, under 
the interpretations of the relation $I$ as 
$a I x$ iff ``object $a $  has feature $x$''
and  $R=R_\Box =R^{-1}_\Diamond$ as $a R x$ iff ``there is evidence that object $a$  has feature $x$'', then, for any concept $c$, the extents of concepts $\Box c$ and $\Diamond c$ can be interpreted as ``the set of objects which {\em certainly} belong to $c$'' (upper approximation), and ``the set of objects which {\em possibly} belong to $c$'' (lower approximation) respectively. Thus, the interpretations of $\Box$ and $\Diamond$ have similar meaning in the LE-logic as in the classical modal logic.

\subsection{Description logic LE-$\mathcal{ALC}$}\label{Ssec: LE-DL}
In this section, we recall the lattice-based description logic LE-$\mathcal{ALC}$ introduced in \cite{van2023non} as a counterpart of non-distributive modal logic. It serves as a natural framework in the realm of description logics for reasoning about the (enriched) formal contexts and the concepts defined by them.

The language of LE-$\mathcal{ALC}$ contains two types of individuals, usually interpreted as {\em objects} and {\em features}. Let $\mathsf{OBJ}$ and $\mathsf{FEAT}$ be disjoint sets of individual names for objects and features. The set $\mathcal{R}$ of the role names for LE-$\mathcal{ALC}$ is the union of three types of relations: (1) a unique relation $I \subseteq \mathsf{OBJ} \times \mathsf{FEAT}$; (2) a set of relations $\mathcal{R}_\Box$  of the form $R_\Box \subseteq \mathsf{OBJ} \times \mathsf{FEAT}$; (3) a set of relations $\mathcal{R}_\Diamond$ of the form $ R_\Diamond \subseteq \mathsf{FEAT} \times \mathsf{OBJ}$. The relation $I$ is intended to be interpreted as the incidence relation of formal contexts and encodes information on which objects have which features, and the relations in $\mathcal{R}_\Box$  and $\mathcal{R}_\Diamond$ encode additional relationships between objects and  features (see~\cite{conradie2021rough} for an extended discussion). In this paper, we work with an LE-$\mathcal{ALC}$ language in which the sets of role names $\mathcal{R}_\Box$ and $\mathcal{R}_\Diamond$ are singletons. All the results in this paper can be generalized to language with multiple role names in each of these sets straightforwardly.

For any set $\mathcal{D}$ of atomic concept names, the language of LE-$\mathcal{ALC}$ concepts is:

{{\centering
 $C \ceq D\ |\ C_1 \wedge C_2\ |\ C_1\vee C_2\ |\ \langle R_\Diamond \rangle C\ |\ [R_\Box ]C$   
\par}}

\noindent where $D \in \mathcal{D}$.
This language matches the LE-logic language and has an analogous intended interpretation of the complex algebras of the enriched formal contexts (cf.~Section \ref{ssec:LE-logic}).
As usual in FCA, $\vee$ and $\wedge$ are to be interpreted as the smallest common superconcept and the greatest common subconcept. 
We do not use the symbols $\forall r$ and $\exists r$ in the context of LE-$\mathcal{ALC}$ because using the same notation verbatim would be ambiguous or misleading, as the semantic clauses of modal operators in LE-logic use the universal quantifiers. 

TBox assertions in LE-$\mathcal{ALC}$ are of the shape $C_1 \equiv C_2$, where $C_1$ and $C_2$ are concepts defined as above. As is standard in DL (see~\cite{DLhandbook} for more details), general concept inclusions of the form $C_1 \sqsubseteq C_2$ can be rewritten as $C_1 \equiv C_2 \wedge C_3$, where $C_3$ is a new concept name.
ABox assertions are of the form:
\noindent

{{\centering
  $aR_\Box x,\quad xR_\Diamond a,\quad aIx,\quad a:C,\quad x::C,\quad \neg \alpha,$  
\par}}

\noindent
where $\alpha$ is any of the first five ABox terms. We refer to the first three types of terms as {\em relational terms}. We denote an arbitrary ABox (resp.~TBox) with $\mathcal{A}$ (resp.~$\mathcal{T}$). The interpretations of the terms $a:C$ and $x::C$ are: ``object $a$ is a member of concept $C$'', and  ``feature  $x$ is in the description of concept $C$'', respectively. Note that we explicitly add negative terms to ABoxes, as the concept names in LE-$\mathcal{ALC}$ do not contain negations. 

An {\em interpretation} for LE-$\mathcal{ALC}$ is a tuple $\mathcal{M} = (\mathbb{F}, \cdot^\mathcal{M}) $, where $\mathbb{F}=(\mathbb{P}, R_\Box, R_\Diamond)$ is
an enriched formal context, and $\cdot^\mathcal{M}$ maps:

\noindent 1. individual names $a \in \mathsf{OBJ}$ (resp.~$x \in \mathsf{FEAT}$)   to some $a^\mathcal{M} \in A$ (resp.~$x^\mathcal{M}\in X$);

\noindent 2. role names $I$, $R_\Box$ and $R_\Diamond$ to  relations $I^\mathcal{M}\subseteq A\times X$, $R_\Box^\mathcal{M}\subseteq A\times X$ and $R_\Diamond^\mathcal{M}\subseteq X\times A$ in $\mathbb{F}$;

\noindent 3. any atomic concept $D$ to $D^\mathcal{M}\in \mathbb{F}^+$, and other concepts as follows:

{{\centering
    \begin{tabular}{l l ll}
    $(C_1 \wedge C_2)^{\mathcal{M}} = C_1^{\mathcal{M}} \wedge C_2^{\mathcal{M}}$   & $(C_1\vee C_2)^{\mathcal{M}} = C_1^{\mathcal{M}} \vee C_2^{\mathcal{M}}$ &
    $([R_\Box]C)^\mathcal{M} = [R_\Box^\mathcal{M}]C^{\mathcal{M}}$  &  $(\langle R_\Diamond \rangle C)^\mathcal{M} =\langle  R_\Diamond^{\mathcal{M}} \rangle C^{\mathcal{M}} $\\
    \end{tabular}
\par}}

\noindent where all the connectives are interpreted as defined in LE-logic (cf.~Section \ref{ssec:LE-logic}). The satisfiability relation for an interpretation $\mathcal{M}$ is defined as follows:

\noindent 1. $\mathcal{M}\models C_1\equiv C_2$ iff $\val{C_1^\mathcal{M}} = \val{C_2^\mathcal{M}}$ iff $\descr{C_2^\mathcal{M}} = \descr{C_1^\mathcal{M}}$.

\noindent 2. $\mathcal{M} \models a:C$ iff $a^\mathcal{M} \in \val{C^\mathcal{M}}$ and $\mathcal{M} \models x::C$ iff $x^\mathcal{M} \in \descr{C^\mathcal{M}}$.

\noindent 3. $ \mathcal{M} \models a I x$ (resp.~$a R_\Box x$, $x R_\Diamond a$) iff $a^{\mathcal{M}} I^{\mathcal{M}} x^{\mathcal{M}} $ (resp.~$a^{\mathcal{M}} R_\Box^{\mathcal{M}} x^{\mathcal{M}} $, $x^{\mathcal{M}} R_\Diamond^{\mathcal{M}} a^{\mathcal{M}} $). 

\noindent 4. $ \mathcal{M} \models  \neg \alpha$, where $\alpha$ is any ABox term, iff $ \mathcal{M} \not\models  \alpha$. 

The satisfaction definition can be extended to concept inclusion as follows. For any concepts $C_1$, and $C_2$, and an interpretation $\mathcal{M}$,  $\mathcal{M} \models C_1 \sqsubseteq C_2$ iff $C_1^{\mathcal{M}}  \leq C_2^{\mathcal{M}}$. 

An interpretation $\mathcal{M}$ is a {\em model} for an LE-$\mathcal{ALC}$ knowledge base $(\mathcal{A}, \mathcal{T})$, where $\mathcal{A}$ is an ABox and $\mathcal{T}$ is a TBox, if $\mathcal{M}\models \mathcal{A}$ and $\mathcal{M} \models \mathcal{T}$. An LE-$\mathcal{ALC}$ knowledge base $(\mathcal{A}, \mathcal{T})$ is said to be inconsistent if there is no model for it. We say an ABox $\mathcal{A}$ is consistent if knowledge base $(\mathcal{A}, \mathcal{T})$ has a model and $\mathcal{T}$ is empty. In this paper, we use LE-$\mathcal{ALC}$ knowledge bases to mean LE-$\mathcal{ALC}$ knowledge bases with acyclic Tboxes unless otherwise stated.

\subsection{Tableaux algorithm for checking LE-$\mathcal{ALC}$ ABox consistency}
\label{Ssec: tableau}
In this section, we introduce the tableaux algorithm for checking the consistency of LE-$\mathcal{ALC}$ ABoxes. An LE-$\mathcal{ALC}$ ABox $\mathcal{A}$ contains a {\em clash} iff it contains both $\beta$ and $\neg \beta$ for some relational term $\beta$. The expansion rules below are designed so that the expansion of $\mathcal{A}$ will contain a clash iff  $\mathcal{A}$ is inconsistent. 
The set $sub(C)$ of sub-formulas of any LE-$\mathcal{ALC}$ concept name $C$ is defined as usual. A concept name $C'$ {\em occurs} in the ABox $\mathcal{A}$ (denoted as $C' \in \mathcal{A}$) if $C'\in sub(C)$ for some $C$ such that one of the terms $a:C$, $x::C$, $\neg (a:C)$, or $\neg (x ::C)$ is in $\mathcal{A}$. A constant $b$ (resp.~$y$) {\em occurs} in $\mathcal{A}$ ($b \in \mathcal{A}$, or $y \in \mathcal{A}$), iff some term containing $b$ (resp.~$y$) occurs in it. 

The tableaux algorithm below provides a method to construct a model $(\mathbb{F},\cdot^\mathcal{M})$ for every consistent $\mathcal{A}$, where $\mathbb{F}= (\mathbb{P}, R_\Box, R_\Diamond)$ is such that, for any $C \in \mathcal{A}$, some $a_C \in A$ and $x_C \in X$ exist such that, for any $a \in A$ (resp.~any $x \in X$), $a \in \val{C^{\mathcal{M}}}$ (resp.~$x \in \descr {C^{\mathcal{M}}}$) iff $a I x_C$ (resp.~$a_C I x$).
We call $a_C$ and $x_C$ the {\em classifying object} and the {\em classifying feature} of $C$, respectively. To make the notation easily readable, we write $a_{\Box C}$, $x_{\Box C}$ (resp.~$a_{\Diamond C}$, $x_{\Diamond C}$) instead of $a_{[R_\Box]C}$, $x_{[R_\Box]C}$ (resp.~$a_{\langle R_\Diamond\rangle C}$, $x_{\langle R_\Diamond\rangle C}$). The commas in each rule are meta-linguistic conjunctions, hence every tableau is non-branching. 

\begin{algorithm} 
\caption{tableaux algorithm for checking LE-$\mathcal{ALC}$ ABox consistency }\label{alg:main algo}
    \hspace*{\algorithmicindent} \textbf{Input}: An   LE-$\mathcal{ALC}$ ABox $\mathcal{A}$. \quad \textbf{Output}: whether $\mathcal{A}$ is inconsistent. 
    \begin{algorithmic}[1]
        \State \textbf{if} there is a clash in $\mathcal{A}$ \textbf{then} \textbf{return} ``inconsistent''.
         \State \textbf{pick} any applicable expansion rule $R$, \textbf{apply} $R$ to $\mathcal{A}$ and proceed recursively.  
          \State \textbf{if} there is no clash after  post-processing  \textbf{return} ``consistent".
    \end{algorithmic}
\end{algorithm}

{{
\centering
\begin{tabular}{cc} 
\mc{1}{c}{\textbf{Creation rule}} & \mc{1}{c}{\textbf{Basic rule}} \\
\AXC{For any $C \in \mathcal{A}$}
\LL{\fns create}
\UIC{$a_C:C$, \quad $x_C::C$}
\DP
 \ & \ 
\rule[-1.85mm]{0mm}{8mm}
\AXC{$b:C, \quad y::C$}
\LL{\fns $I$}
\UIC{$b I y$}
\DP 
\end{tabular}

\begin{tabular}{cccc}
\mc{2}{c}{\textbf{Rules for the logical connectives}} & \mc{2}{c}{\textbf{$I$-compatibility rules}} \\
\rule[-1.85mm]{0mm}{8mm}
 \AXC{$b:C_1 \wedge  C_2$}
\RL{\fns $\wedge_A$}
\UIC{$b:C_1,$ \quad $b:C_2$}
\DP 
\ & \

\AXC{$y::C_1 \vee  C_2$}
\LL{\fns $\vee_X$}
\UIC{$y::C_1,$ \quad $y::C_2$}
\DP 
\  &  \

\AXC{$b I \Box y$}
\LL{\fns $\Box y$}
\UIC {$b R_\Box y$}
\DP

\ & \

\AXC{$b I \blacksquare y$}
\RL{\fns $\blacksquare y$}
\UIC {$y R_\Diamond b$}
\DP
\\[3mm]

 \AXC{$b:[R_\Box]C,$ \quad  $y::C$}
\LL{\fns $\Box$}
\UIC{$b R_\Box y$}
\DP 
\ & \ 
  \AXC{$y::\langle  R_\Diamond \rangle C,$ \quad  $b:C$}
\RL{\fns $\Diamond$}
\UIC{$y R_\Diamond b$}
\DP 
\ & \
\AXC{$\Diamond b I  y$}
\LL{\fns $\Diamond b$}
\UIC {$y R_\Diamond b$}
\DP
\ & \
\AXC{$\Diamondblack b I y$}
\RL{\fns $\Diamondblack b$}
\UIC {$b R_\Box y$}
\DP
\\[3 mm]

\end{tabular}
\begin{tabular}{rl}
\mc{2}{c}{\textbf{inverse rules for connectives}}  \\

   \AXC{$b:C_1$, $b:C_2$, $C_1 \wedge C_2 \in \mathcal{A}$}
\LL{\fns $\wedge_A^{-1}$}
\UIC {$b:C_1 \wedge C_2$}
\DP
\ & \
\AXC{$y::C_1$, $y::C_2$, $C_1 \vee C_2 \in \mathcal{A}$}
\RL{\fns $\vee_X^{-1}$}
\UIC {$y::C_1 \vee C_2$}
\DP
\end{tabular}
\begin{tabular}{rl}
\mc{2}{c}{\textbf{Adjunction rules}}  \\
\rule[-1.85mm]{0mm}{8mm}
\AXC{$ b R_\Box y$}
\LL{\fns $R_\Box$}
\UIC{$\Diamondblack b I y,$ \quad  $b I \Box y$}
\DP 
\ & \ 
\AXC{$y R_\Diamond b$}
\RL{\fns $R_\Diamond$}
\UIC{$\Diamond b I y,$ \quad  $b I \blacksquare y$}
\DP 
\\ [2mm]
\end{tabular}

\begin{tabular}{cccc}
\mc{2}{c}{\textbf{Basic rules for negative assertions}} & \mc{2}{c}{\textbf{Appending  rules}} \\
\rule[-1.85mm]{0mm}{8mm}
\AXC{$\neg (b:C)$}
\LL{\fns $\neg b$}
\UIC{$\neg (b I x_C)$}
\DP 
\ & \ 
\AXC{$\neg (x::C)$}
\RL{\fns $\neg x$}
\UIC{$\neg (a_C I x)$}
\DP 
\ & \ 
\AXC{$b I x_C$}
\LL{\fns $x_C$}
\UIC{$b:C$}
\DP 
\ & \ 
\AXC{$a_C I y$}
\RL{\fns $a_C$}
\UIC{$y::C$}
\DP 
\\
\end{tabular}
\par}}
\smallskip

\noindent Note that in the creation rule, the   $a_C$ and $x_C$ are new (i.e.~different from any names already appearing in the tableaux) special object and feature names unique for each $C$.  In the adjunction rules, the individuals $\Diamondblack b$, $\Box y$, $\Diamond b$, and $\blacksquare y$ are new and unique individual names\footnote{The new individual names $\Diamondblack b$, $\Diamond b$, $\Box y$, and $\blacksquare y$ appearing in tableaux expansion are purely syntactic entities. Intuitively, they correspond to the classifying objects (resp.~features) of the concepts $\Diamondblack \textbf{b}$, $\Diamond \textbf{b}$ (resp.~$\Box \textbf{y}$, resp.~$\blacksquare \textbf{y}$), where $\mathbf{b}=(b^{\uparrow\downarrow}, b^\uparrow)$ (resp.~$\mathbf{y}=(y^\downarrow, y^{\downarrow\uparrow})$) is the concept generated by $b$ (resp.~$y$), and the operation  $\Diamondblack$ (resp.$\blacksquare$) is the left (resp.~right) adjoint of operation $\Box$ (resp.~$\Diamond$).} for relations $R_\Box$ and $R_\Diamond$, and individuals $b$ and $y$, except for $\Diamond a_C= a_{\Diamond C}$ and $\Box x_C= x_{\Box C}$. Side conditions that the conjunction and disjunction occur in $\mathcal{A}$ for rules $\wedge_A^{-1}$ and $\vee_X^{-1}$ ensure that we do not add new meets or joins to the concept names.

The following theorem follows from the results in \cite{van2023non}:


For any consistent LE-$\mathcal{ALC}$ ABox $\mathcal{A}$, the tableaux completion $\overline{\mathcal{A}}$ of $\mathcal{A}$ is a set of assertions which are obtained by applying the tableaux algorithm~\ref{alg:main algo} to $\mathcal{A}$. From $\overline{\mathcal{A}}$, we can construct a model $\mathcal{M} =(\mathbb{F},\cdot^\mathcal{M})$, where $\mathbb{F} =(A,X,I,R_\Box,R_\Diamond)$ is described as follows:
$A$ and $X$ are taken to be the sets of all individual names of object and feature that occur in $\overline{\mathcal{A}}$, respectively, and all individuals are interpreted by their names. For any role name $R$, its interpretation $ R^{\mathcal{M}} $ is defined as follows: for any individual names $l$, $m$, $l R^{\mathcal{M}} m$ iff $ l R m \in \overline{\mathcal{A}}$. Finally, for the atomic concept $D$, its interpretation is set to the concept $(x_D^\downarrow, a_D^\uparrow)$. The following result was proved in \cite{van2023non}.

\begin{theorem}\label{th:model}
For any LE-$\mathcal{ALC}$ ABox $\mathcal{A}$, 
\begin{itemize}
    \item the tableaux algorithm applied to $\mathcal{A}$ terminates in polynomial time in $|\mathcal{A}|$;
    \item $\overline{\mathcal{A}}$ contains a clash iff $\mathcal{A}$ is inconsistent;
    \item if $\mathcal{A}$ is inconsistent, then the model $\mathcal{M}$ as constructed above is a model for $\mathcal{A}$ of the size polynomial in $|\mathcal{A}|$. Moreover, for any individual names $b$, $y$, and concept $C$ occurring in $\mathcal{A}$, $b \in \val{C}$ iff $b I x_C \in \overline{\mathcal{A}}$, and $y \in \descr{C}$ iff $a_c I y \in  \overline{\mathcal{A}}$.
\end{itemize}
 
\end{theorem}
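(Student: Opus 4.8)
The statement bundles three claims, which I would establish in the order given (reading ``consistent'' for the evident typo ``inconsistent'' in the third item, since an inconsistent ABox has no model). Claim~1 is a complexity bound on the run of Algorithm~\ref{alg:main algo}; Claim~2 says that clash-freeness of the completion $\overline{\mathcal{A}}$ is equivalent to consistency of $\mathcal{A}$; and Claim~3, in the consistent case, recognises the structure $\mathcal{M}$ read off from $\overline{\mathcal{A}}$ as a model and computes the extents and intents of the concepts occurring in $\mathcal{A}$ from their classifying individuals. Everything is bookkeeping or a routine truth-lemma induction except for two points: the termination-with-polynomial-bound analysis, and the verification that the polarity underlying $\mathcal{M}$ is an \emph{enriched} formal context (i.e.\ that $R_\Box$ and $R_\Diamond$ are $I$-compatible). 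These are exactly the two places whose proofs had to be repaired for the present $\top,\bot$-free concept language, cf.\ the footnote above.

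For Claim~1 (and the polynomial size of $\mathcal{M}$ in Claim~3) I would argue that only polynomially many distinct assertions can ever be produced. Every concept name occurring in an expansion is a subformula of a concept already present in $\mathcal{A}$, so there are $O(|\mathcal{A}|)$ of them, and the creation rule adds $O(|\mathcal{A}|)$ classifying names $a_C, x_C$. The crux is to bound the individuals produced by the adjunction rules ($\Diamondblack b$, $\Box y$, $\Diamond b$, $\blacksquare y$): using the identifications $\Box x_C = x_{\Box C}$, $\Diamond a_C = a_{\Diamond C}$ and a careful analysis of how relational terms $b R_\Box y$ and $y R_\Diamond b$ can enter $\overline{\mathcal{A}}$, one shows that the adjunction rules fire only boundedly often and cannot spawn an unbounded tower of fresh names, so that the stock of individual --- hence of relational --- terms stays polynomial in $|\mathcal{A}|$. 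Since each rule application introduces at least one new term and each is checkable in polynomial time, the algorithm halts in polynomial time.

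For the direction ``inconsistent $\Rightarrow$ clash'' of Claim~2 I would show, contrapositively, that every expansion rule is consistency-preserving: from a model of the current ABox one builds a model of the ABox obtained after a single rule application. For the propositional rules, the $I$-compatibility rules, the inverse rules and the appending rules the newly added assertions are semantic consequences of those already present, so nothing is needed. For the creation rule and the adjunction rules, which introduce fresh names, one enlarges the enriched formal context by adjoining objects or features whose Galois closures are prescribed by the concept or individual they are meant to witness, and extends $R_\Box$ and $R_\Diamond$ so that $R_\Box^{(0)}[x]$, $R_\Box^{(1)}[a]$, $R_\Diamond^{(0)}[a]$, $R_\Diamond^{(1)}[x]$ stay Galois-stable; this is the delicate step, and is where the absence of $\top$ and $\bot$ among concept names is used. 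Since a clash $\beta, \neg\beta$ is satisfied by no interpretation, a consistent $\mathcal{A}$ has a clash-free $\overline{\mathcal{A}}$.

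For the converse direction of Claim~2 and all of Claim~3 I would take a clash-free $\overline{\mathcal{A}}$ and form $\mathcal{M}$ exactly as in the paragraph preceding the theorem. The first task is to check $I$-compatibility of $R_\Box$ and $R_\Diamond$: the $I$-compatibility and adjunction rules give $R_\Box^{(0)}[x] = (\Box x)^\downarrow$, $R_\Box^{(1)}[a] = (\Diamondblack a)^\uparrow$, and dually for $R_\Diamond$, each Galois-stable. The second is the truth lemma --- for every $C$ occurring in $\mathcal{A}$ and all names $b, y$ of $\overline{\mathcal{A}}$, $b \in \val{C^{\mathcal{M}}}$ iff $b I x_C \in \overline{\mathcal{A}}$, and $y \in \descr{C^{\mathcal{M}}}$ iff $a_C I y \in \overline{\mathcal{A}}$ --- proved by induction on $C$. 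Via the $I$ rule, the appending rules $x_C, a_C$ and the creation rule this reduces to $b I x_C \in \overline{\mathcal{A}} \Leftrightarrow (b{:}C) \in \overline{\mathcal{A}}$ and $a_C I y \in \overline{\mathcal{A}} \Leftrightarrow (y{::}C) \in \overline{\mathcal{A}}$: the atomic case is immediate from $D^{\mathcal{M}} = (x_D^\downarrow, a_D^\uparrow)$; the $\wedge$ and $\vee$ cases use $\wedge_A, \wedge_A^{-1}, \vee_X, \vee_X^{-1}$ and the fact that meet and join in $\mathbb{P}^+$ are intersection of extents and of intents; and the $[R_\Box]$ and $\langle R_\Diamond \rangle$ cases use the $\Box$ and $\Diamond$ rules and the adjunction rules together with the identifications $\Box x_C = x_{\Box C}$, $\Diamond a_C = a_{\Diamond C}$ (so that firing $R_\Box$ on $b R_\Box x_C$ delivers precisely $b I x_{[R_\Box]C}$, whence $b{:}[R_\Box]C$ by the appending rule). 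The truth lemma together with clash-freeness then gives $\mathcal{M} \models \mathcal{A}$: positive relational and concept assertions of $\overline{\mathcal{A}}$ hold by construction and by the truth lemma, while for $\neg\alpha \in \overline{\mathcal{A}}$ the basic rules for negative assertions push the negation down to a relational term which, by clash-freeness, is absent from $\overline{\mathcal{A}}$ and hence false in $\mathcal{M}$. I expect the main obstacle to be the termination analysis together with the $I$-compatibility verification for the fresh elements --- both in the soundness step and in the constructed $\mathcal{M}$ --- whereas the truth-lemma induction itself is straightforward once these are in place.
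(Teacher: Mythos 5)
First, a point of comparison that matters here: the paper does not prove Theorem~\ref{th:model} at all --- it states it and cites \cite{van2023non} (``The following result was proved in \cite{van2023non}''). What can be checked against your proposal is the machinery from that proof which this paper reproduces, namely the modal-depth lemma (Lemma~\ref{lem:depth}, quoted as \cite[Lemma 1]{van2023non}) for termination and the model-extension lemma (Lemma~\ref{lem:characteristic consistency}, quoted as \cite[Lemma 5]{van2023non}) for the ``consistent $\Rightarrow$ clash-free'' direction. Measured against that, your overall architecture is right: you correctly read ``inconsistent'' as a typo for ``consistent'' in the third item, your soundness direction (constructing $\mathcal{M}$ from a clash-free completion, verifying $I$-compatibility via $R_\Box^{(0)}[y]=(\Box y)^\downarrow$ and $R_\Box^{(1)}[b]=(\Diamondblack b)^\uparrow$, then a truth-lemma induction reducing $b\in\val{C}$ to $b\,I\,x_C\in\overline{\mathcal{A}}$ via the $I$, creation and appending rules) is exactly the intended argument, and your other direction (extending a model of $\mathcal{A}$ so that it satisfies every added term) matches the cited Lemma~\ref{lem:characteristic consistency}, modulo doing the extension rule-by-rule rather than once up front.

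The genuine gap is in the termination argument. You write that ``a careful analysis of how relational terms $b R_\Box y$ and $y R_\Diamond b$ can enter $\overline{\mathcal{A}}$'' shows that ``the adjunction rules fire only boundedly often and cannot spawn an unbounded tower of fresh names'' --- but that analysis \emph{is} the termination proof, and you have not supplied its key idea. The mechanism, visible in Lemma~\ref{lem:depth}, is a pair of signed depth measures $\Box_{\mathcal{D}}(\cdot)$ and $\Diamond_{\mathcal{D}}(\cdot)$ on individual names and concepts together with five invariants bounding, for every term $b\,I\,y$, $b\,R_\Box\,y$, $y\,R_\Diamond\,b$, $b{:}C$, $y{::}C$ in any partial expansion, the combined depths of its two sides by $\Box_{\mathcal{D}}(\mathcal{A})+1$ and $\Diamond_{\mathcal{D}}(\mathcal{A})+1$; it is these inequalities that rule out an unbounded alternation of the form $b \mapsto \Diamondblack b \mapsto \cdots$ produced by interleaving the adjunction and $I$-compatibility rules. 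Without such an invariant, ``boundedly often'' is an assertion, not an argument --- and the danger is real: the footnote in the introduction records that an earlier version of precisely this termination/$I$-compatibility proof was wrong when $\top$ and $\bot$ were admitted as concept names, and the appendix shows that adding even one extra rule forces a nontrivial re-derivation of the invariant (Lemma~\ref{lem:Box-Diamond separation} and Corollary~\ref{cor:box-diamond separation}). Everything else in your plan is sound, but the termination step needs this quantitative potential-function argument spelled out rather than deferred.
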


\begin{remark}
    The algorithm can be easily extended to acyclic TBoxes (exponential-time), using the unraveling technique (see~\cite{DLbook} for details).
\end{remark}

\subsection{Ontology-mediated query answering}\label{ssec:Ontology mediated query answering}
A key task in description logic ontologies (knowledge bases) is to support various reasoning tasks, one of which is to answer queries based on ontologies~\cite[chapter 7]{DLhandbook}.
Let $\mathcal{K} =(\mathcal{A}, \mathcal{T})$ be a consistent knowledge base given in a specific description logic $\mathrm{DL}$. Given a query $q(\overline{p})$ (with a possibly empty tuple of free variables $\overline{p}$) in (appropriate) first-order language and a model $\mathcal{M}$ of $\mathcal{K}$, we say that a sequence of individuals $\overline{a}$ in $\mathcal{A}$ is an {\em answer for query $q(\overline{p})$} w.r.t.~model $\mathcal{M}$ of knowledge base $\mathcal{K}$ if $\mathcal{M} \models q(\overline{a})$. An answer $\overline{a}$ in $\mathcal{A}$ is said to be {\em a certain answer for the query $q(\overline{p})$} with respect to a knowledge base $\mathcal{K}$ if it is an answer for $q(\overline{p})$ w.r.t.~all the models of $\mathcal{K}$. An important notion used in ontology-mediated query answering is that of a {\em universal} or {\em canonical} model. For a query $q(\overline{p})$ on a knowledge base $\mathcal{K}$, we say that a model $\mathcal{M}$ of $\mathcal{K}$ is a universal or canonical model for $\mathcal{K}$ if for any $\overline{a}$ appearing in $\mathcal{K}$, $\mathcal{K} \models q(\overline{a})$ iff $\mathcal{M} \models q(\overline{a})$. In case $\overline{p}$ is empty, the answer or the certain answer for such query is true or false depending on whether $\mathcal{M} \models q$ or not. Thus, we can provide certain answer for query $q$ over $\mathcal{K}$ by only looking over the universal or canonical model $\mathcal{M}$. Universal models for different description logics have been extensively studied~\cite{calvanese2007tractable,glimm2008conjunctive,Kontchakov2014,bienvenu2015ontology}. In this paper, we would focus on answering some specific types of queries over knowledge bases in non-distributive description logic LE-$\mathcal{ALC}$. To this end, we show that for any LE-$\mathcal{ALC}$ ABox $\mathcal{A}$, the model constructed from it by applying the LE-$\mathcal{ALC}$ tableaux algorithm \ref{alg:main algo} acts as the universal model for $\mathcal{A}$ w.r.t.~several different types of queries. As the tableaux algorithm is polynomial in time and produces a polynomial size model in $|\overline{\mathcal{A}}|$, this provides a polynomial-time algorithm to answer these types of queries.


\section{Query answering over LE-$\mathcal{ALC}$ ABoxes}\label{sec:query answering}
In this section, we discuss different types of queries pertaining to  LE-$\mathcal{ALC}$ ABoxes and develop algorithms to answer them. We start by showing that for any consistent LE-$\mathcal{ALC}$ ABox $\mathcal{A}$, the model obtained  using Algorithm \ref{alg:main algo} behaves like universal model w.r.t.~several types of queries. 

\subsection{Universal model for LE-$\mathcal{ALC}$ ABox}
For any individual name appearing in the tableaux expansion $\overline{\mathcal{A}}$ of an LE-$\mathcal{ALC}$ ABox $\mathcal{A}$, we define its {\em concept companion} as follows :

1.~For any constant $b$ (resp.~$y$) appearing in $\mathcal{A}$,  $con(b)$ (resp.~$con(y)$) is a concept such that for any interpretation $\mathcal{M}$, $con(b)^\mathcal{M} = \mathbf{b}$ (resp.~$con(y)^\mathcal{M} = \mathbf{y}$), where $\mathbf{b}$ (resp.~$\mathbf{y}$) denotes the concept generated by $b^\mathcal{M}$ (resp.~$y^\mathcal{M}$), i.e. $\mathbf{b}=(({b^\mathcal{M}})^{\uparrow\downarrow}, ({b^\mathcal{M}})^\uparrow)$ (resp.~$\mathbf{y}=(({y^\mathcal{M}})^\downarrow, ({y^\mathcal{M}})^{\downarrow\uparrow})$). 

2.~For any constant $\Diamond b$ (resp.~$\Diamondblack b$, resp.~$\blacksquare y$, resp.~$\Box y$) appearing in $\mathcal{A}$,  $con(\Diamond b)= \Diamond con(b)$ (resp. 
$con(\Diamondblack b)= \Diamondblack con(b)$, resp.~$con(\blacksquare y)= \blacksquare con(y)$, resp.~$con(\Box y)= \Box con(y)$), where the operation $\Diamondblack$ (resp. 
$\blacksquare$) is the left (resp.~right) adjoint of $\Box$ (resp.~$\Diamond$). 
 
\begin{lemma}\label{lem:universal model properties}
For any consistent LE-$\mathcal{ALC}$ ABox $\mathcal{A}$,  individual names $b$, $y$ appearing in its completion $\overline{\mathcal{A}}$, and concept $C$ appearing in $\mathcal{A}$:

\smallskip
{{
\centering
\begin{tabular}{ll}
   1.~$\mathcal{A} \models  con(b) \sqsubseteq   con(y)$  iff $b I y \in \overline{\mathcal{A}}$,  & 2.~$\mathcal{A} \models  con(b) \sqsubseteq  \Box con (y)$  iff $b R_\Box  y \in \overline{\mathcal{A}}$, \\
   3.~$\mathcal{A} \models  \Diamond con(b) \sqsubseteq   con (y)$  iff $y R_\Diamond  b \in \overline{\mathcal{A}}$,  & 4.~$\mathcal{A} \models  con(b) \sqsubseteq   C$  iff $b I x_C \in \overline{\mathcal{A}}$, \\
   5.~$\mathcal{A} \models  C \sqsubseteq  con(y)  $  iff $a_C I y \in \overline{\mathcal{A}}$, & 6.~$\mathcal{A} \models  con(b) \sqsubseteq  C  $  iff $b:C \in \overline{\mathcal{A}}$, \\
   7.~$\mathcal{A} \models  C \sqsubseteq  con(y)$  iff $y::C \in \overline{\mathcal{A}}$, & 8.~$\mathcal{A} \models  C_1 \sqsubseteq  C_2$  iff $a_{C_1} I x_{C_2}\in \overline{\mathcal{A}}$. \\
\end{tabular} 
\par
}}
\end{lemma}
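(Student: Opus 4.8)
The proof is essentially a matter of reorganising what Theorem~\ref{th:model} already gives us, together with a standard ``semantic-to-syntactic'' translation via classifying objects/features. The unifying observation is that for \emph{any} concept $C$ occurring in $\mathcal{A}$ (in particular for the $con(\cdot)$ concepts and the new concepts $\Diamond con(b)$, $\Box con(y)$, etc., once we add their classifying individuals to the relevant tableaux), Theorem~\ref{th:model} equates membership facts $b\in\val{C}$ with the presence of the relational term $bIx_C$ in $\overline{\mathcal{A}}$. So the first move is to reduce every subsumption statement $\mathcal{A}\models C_1\sqsubseteq C_2$ to a statement of the form ``$a_{C_1} I x_{C_2}\in\overline{\mathcal{A}}$'' — i.e. prove item 8 first, and then read items 4, 5, 7 off it by recognising that $b$ and $y$ behave, up to Galois closure, like the classifying object $a_{con(b)}$ and classifying feature $x_{con(y)}$ respectively; items 1, 6 are the special cases obtained by further instantiating. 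Items 2 and 3 are then the modal analogues, handled by the adjunction/$I$-compatibility rules.

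\medskip
\noindent The key lemma to isolate is the following: for any concepts $C_1, C_2$ occurring in $\overline{\mathcal{A}}$, $\mathcal{A}\models C_1\sqsubseteq C_2$ iff $C_1^{\mathcal{M}}\leq C_2^{\mathcal{M}}$ in the canonical model $\mathcal{M}$, iff $a_{C_1}\in\val{C_2^{\mathcal{M}}}$, iff (by Theorem~\ref{th:model}) $a_{C_1} I x_{C_2}\in\overline{\mathcal{A}}$. The left-to-right direction of the first ``iff'' is immediate since $\mathcal{M}$ is a model of $\mathcal{A}$; the right-to-left direction is where universality of $\mathcal{M}$ is actually used — one must show that if $a_{C_1} I x_{C_2}\in\overline{\mathcal{A}}$ then the inclusion holds in \emph{every} model of $\mathcal{A}$, not just in $\mathcal{M}$. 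Here the plan is to invoke the soundness of the tableaux rules (each rule preserves satisfiability, so every assertion in $\overline{\mathcal{A}}$ is entailed by $\mathcal{A}$) together with the defining property of classifying individuals: in any model $\mathcal{N}\models\mathcal{A}$ in which the classifying individuals have been interpreted appropriately (or after the harmless conservative extension that adds them), $a_{C_1}^{\mathcal{N}}\in\val{C_1^{\mathcal{N}}}$ holds because $a_{C_1}:C_1\in\overline{\mathcal{A}}$, and dually $x_{C_2}^{\mathcal{N}}\in\descr{C_2^{\mathcal{N}}}$; combined with $a_{C_1}^{\mathcal{N}} I^{\mathcal{N}} x_{C_2}^{\mathcal{N}}$ and the characterisation of $\val{C^\mathcal{N}}$ via $x_C$, this forces $\val{C_1^{\mathcal{N}}}\subseteq\val{C_2^{\mathcal{N}}}$. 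Once this lemma is in hand, items 4 and 7 follow by taking $C_1=con(b)$ (noting $\val{con(b)^\mathcal{M}}=b^{\uparrow\downarrow}$, so $a_{con(b)}$ and $b$ are interchangeable for incidence purposes) or $C_2=con(y)$; item 1 is the case $C_1=con(b)$, $C_2=con(y)$; and item 6 follows from 4 via the appending rule $x_C$ and the basic rule $\neg b$, which together give $bIx_C\in\overline{\mathcal{A}}$ iff $b:C\in\overline{\mathcal{A}}$ (this equivalence is exactly the ``moreover'' clause of Theorem~\ref{th:model}).

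\medskip
\noindent For items 2 and 3, the plan is to unfold the modal semantics. For item 2: $\mathcal{A}\models con(b)\sqsubseteq\Box con(y)$ means $\mathbf{b}\leq[R_\Box]\mathbf{y}$ in every model, equivalently (by adjunction) $\Diamondblack\mathbf{b}\leq\mathbf{y}$, equivalently $b^{\uparrow\downarrow}\subseteq\val{(\Diamondblack con(b))^\mathcal{N}}\subseteq y^{\downarrow}$, which — tracing through the interpretation of $[R_\Box]$ and the meaning of $I$-compatibility — amounts to $b R_\Box^{\mathcal{N}} y$. The syntactic side is handled by the adjunction rule $R_\Box$ (which produces $\Diamondblack b I y$ and $b I\Box y$ from $bR_\Box y$) and the $I$-compatibility rules $\Box y$ and $\Diamondblack b$ (which recover $bR_\Box y$), so that $bR_\Box y\in\overline{\mathcal{A}}$ is equivalent to the chain of incidence facts that, via item 1 applied to the extended language containing $\Diamondblack b$ and $\Box y$, expresses the subsumption. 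Item 3 is the $\Diamond$-dual, using the $R_\Diamond$, $\Diamond b$, $\blacksquare y$ rules symmetrically.

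\medskip
\noindent \textbf{Main obstacle.} The delicate point is the right-to-left (``completeness'') direction in the key lemma, i.e. showing the entailment holds in \emph{all} models and not merely in the constructed $\mathcal{M}$. This requires care about the status of the auxiliary individual names ($a_C$, $x_C$, $\Diamondblack b$, $\Box y$, etc.): they do not appear in $\mathcal{A}$, so one must argue that any model of $\mathcal{A}$ can be conservatively expanded to interpret them in the way the tableaux rules intend (classifying individuals realising the required Galois-stable sets exist in any enriched formal context because the relevant sets are Galois-stable by $I$-compatibility), and that the tableaux rules are sound with respect to this expanded semantics. The modal cases compound this, since the adjoint operators $\Diamondblack,\blacksquare$ are not part of the object language and their classifying individuals only exist ``virtually''; one has to check that the adjunction rules faithfully capture the adjunction $\Diamondblack\dashv\Box$, $\blacksquare\dashv\Diamond$ at the level of incidence assertions. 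I expect this bookkeeping — rather than any conceptually hard step — to be where the real work of the proof lies, and it is presumably discharged largely by appeal to the soundness and completeness results already established in~\cite{van2023non}.
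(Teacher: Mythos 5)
Your left-to-right direction (immediate from Theorem~\ref{th:model}) and your identification of where the work lies (showing that every term added by the tableaux is semantically entailed, with the auxiliary individuals read through their concept companions) match the paper. But the one concrete justification you offer for that step is a false inference: ``each rule preserves satisfiability, so every assertion in $\overline{\mathcal{A}}$ is entailed by $\mathcal{A}$''. Satisfiability preservation (\emph{some} model of the premises extends to a model of the conclusion) is strictly weaker than entailment (\emph{every} model, conservatively extended, satisfies the conclusion), and the latter is exactly what the lemma asserts; it is not delivered by the soundness theorem of \cite{van2023non}, which only says that a clash implies inconsistency. The paper supplies the missing content by a \emph{simultaneous} induction over all eight items on the number of expansion steps, checking each rule: e.g.\ for the basic rule, $bIy$ is derived from $b:C$ and $y::C$, so item~1 (and item~8) for the new term requires items~6 and~7 for the premises; for the rule $\Box$, $bR_\Box y$ comes from $b:[R_\Box]C$ and $y::C$ via monotonicity of $\Box$; for the $I$-compatibility rule $\blacksquare y$, one needs item~1 plus the adjunction $\Diamond\dashv\blacksquare$. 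This dependency structure also defeats your sequential decomposition (prove item~8 first, read off the rest): item~8 for a term produced by the basic rule already presupposes items~6 and~7, so the eight claims cannot be established one after another --- the induction hypothesis must cover all of them at once.

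Two smaller inaccuracies: the equivalence $b:C\in\overline{\mathcal{A}}$ iff $bIx_C\in\overline{\mathcal{A}}$ comes from the appending rule $x_C$ in one direction and from the creation rule together with the basic rule $I$ (applied to $b:C$ and $x_C::C$) in the other --- the negative rule $\neg b$ that you cite plays no role there. And since $con(b)$ is a semantically defined companion with no classifying individual $a_{con(b)}$ in the tableaux, items~1--3 are not literally instances of item~8; the paper avoids this by treating them as separate clauses of the same induction rather than as corollaries.
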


\begin{proof}
The proofs from left to right for items 1-5 follow immediately from Theorem~\ref{th:model}. We prove the right to left implications by simultaneous (over all the items) induction on the number of expansion rules applied. The base case is when the term in the right appears in $\mathcal{A}$. In this case, it is immediate from the definition that we get the required condition on the left. 

    \textbf{Creation rule.} By this rule, $a_C : C$ and $x_C :: C$ are added by any $C\in\mathcal{A}$, which imply $C\sqsubseteq C$.

    \textbf{Basic rule.} By this rule, $bIy$ is added from $b:C$ and $y::C$. By induction applied to items 6 and 7, we get $con(b)\sqsubseteq C$ and $C\sqsubseteq con(y)$, which imply that $con(b)\sqsubseteq con(y)$. It is easy to check item 4 and item 5 also hold. For item 8, $a_{C_1} I x_{C_2}$ is added from $a_{C_1}:C$ and $x_{C_2}::C$. By induction applied to items 6 and 7, we have $C_1\sqsubseteq C$ and $C\sqsubseteq C_2$, which imply that $C_1\sqsubseteq C_2$.

    \textbf{Rules $\wedge_A$,  $\vee_X$, $\wedge_A^{-1}$, $\vee_X^{-1}$.} We give the proofs for rules $\wedge_A$ and $\wedge_A^{-1}$. The proofs for $\vee_X$ and $\vee_X^{-1}$ are analogous. 
    By rule $\wedge_A$, $b:C_1$ and $b:C_2$ are added from $b:C_1\wedge C_2$. By induction applied to item 6, $con(b)\sqsubseteq C_1\wedge C_2$, and thus $con(b)\sqsubseteq C_1$ and $con(b)\sqsubseteq C_2$. 
    By rule $\wedge^{-1}_A$, $b:C_1\wedge C_2$ is added from $b:C_1$, $b:C_2$, and $C_1\wedge C_2\in \mathcal{A}$. By induction applied to item 6, we have $con(b)\sqsubseteq C_1$ and $con(b)\sqsubseteq C_2$. Since $C_1\wedge C_2$ exists in $\mathcal{A}$, we get $con(b)\sqsubseteq C_1\wedge C_2$.
    
    \textbf{I-compatibility rules.} We give the proofs for rules $\Box y$ and $\blacksquare y$. The proofs for $\Diamond b$ and $\Diamondblack b$ are analogous. 
    By rule $\Box y$, $b R_\Box y$ is added from $b I \Box y$, by induction applied to item 1, we get $con(b)\sqsubseteq con(\Box y)$ and by definition we have $con(b)\sqsubseteq \Box con(y)$. 
    By rule $\blacksquare y$, $y R_\Diamond b$ is added from $b I \blacksquare y$. By induction applied to item 1, we have $con(b)\sqsubseteq con(\blacksquare y)$, and by definition $con(b)\sqsubseteq\blacksquare con(y)$. By adjunction, we have $\Diamond con(b)\sqsubseteq con(y)$.
    
    \textbf{Rules $\Box$ and $\Diamond$.} We give the proof for rule $\Box$, and the proof for $\Diamond$ is analogous. By rule $\Box$, $b R_\Box y$ is added from $b:[R_\Box]C$ and $y::C$. By induction applied to item 6, we have $con(b)\sqsubseteq\Box C$. By induction on item 7, we get $C\sqsubseteq con(y)$. As $\Box$ is a monotone operator, we have $\Box C\sqsubseteq\Box con(y)$. Thus, $con(b)\sqsubseteq\Box con(y)$. 

    \textbf{Adjunction rules.} We give the proof for rule $R_\Box$ and the proof for $R_\Diamond$ is analogous. By rule $R_\Box$, $b I \Box y$ (resp.~$\Diamondblack b I y$) is added from $b R_\Box y$. By induction applied to item 2, we have $con(b)\sqsubseteq\Box con(y)$ (resp.~$\Diamondblack con(b)\sqsubseteq con(y)$ by adjunction), and thus $con(b)\sqsubseteq con(\Box y)$ (resp.~$con(\Diamondblack b)\sqsubseteq con(y)$).

    \textbf{Appending rules.} By rule $x_C$, the term $b:C$ is added from $b I x_C$. By induction applied to item 4, we have $con(b)\sqsubseteq C$. By rule $a_C$, $y::C$ is added from $a_C I y$ and by induction applied to item 5, we have $C\sqsubseteq con(y)$.
\end{proof}

Lemma \ref{lem:universal model properties} implies that for any consistent ABox $\mathcal{A}$, the model generated from $\mathcal{A}$ using Algorithm \ref{alg:main algo} acts as a universal model for several types of queries. We describe some such queries below.

\fakeparagraph{Relationship queries.} These queries are either Boolean queries asking if two individuals are related by  relation $I$, $R_\Box$ or $R_\Diamond$, e.g.~$q= b I y$, or queries asking for names of all individuals appearing in $\mathcal{A}$ that are related to some element by  relation $I$, $R_\Box$ or $R_\Diamond$, e.g.~$q(p)= b R_\Box p$. 

\fakeparagraph{Membership queries.} These queries are either Boolean queries asking if some object or feature belongs to a given concept, e.g.~$q= y::C$, or queries asking for names of all individuals appearing in $\mathcal{A}$ that are in the extension or intension of a concept $C$, e.g.~$q(p)=p:C$. 

\fakeparagraph{Subsumption queries.} These queries are Boolean queries asking if a concept $C_1$ is included in $C_2$, i.e.~$q=C_1 \sqsubseteq C_2$.\footnote{Note that no non-trivial subsumptions are implied by knowledge bases with acyclic TBoxes. However, we include such queries as the algorithm can be used to answer queries regarding trivial (those implied by logic) subsumption efficiently. Moreover, we believe that the algorithm  extend ideas used to answer these queries may be used in future generalizations to knowledge bases with cyclic TBoxes.}

As Algorithm \ref{alg:main algo} is polynomial-time and gives a model which is of polynomial-size in  $|\mathcal{A}|$, we have the following corollary. 

\begin{corollary}
 For any LE-$\mathcal{ALC}$  ABox $\mathcal{A}$, a query $q$ of the above forms consisting of concepts and individual names appearing in $\mathcal{A}$, can be answered in polynomial time in $|\mathcal{A}|$ using Algorithm \ref{alg:main algo}. 
\end{corollary}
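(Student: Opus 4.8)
The plan is to obtain the corollary as an immediate consequence of Theorem~\ref{th:model} and Lemma~\ref{lem:universal model properties}: answering a query is reduced to deciding membership of a single term in the tableaux completion $\overline{\mathcal{A}}$. First I would run Algorithm~\ref{alg:main algo} on the input ABox $\mathcal{A}$. By Theorem~\ref{th:model} this terminates in time polynomial in $|\mathcal{A}|$, yields a completion $\overline{\mathcal{A}}$ of size polynomial in $|\mathcal{A}|$, and decides whether $\mathcal{A}$ is consistent. If $\mathcal{A}$ is inconsistent, then $\mathcal{A} \models q$ holds vacuously for every query, so a Boolean query is answered ``true'' and an open query $q(p)$ is answered by the list of all individual names occurring in $\mathcal{A}$ (of size at most $|\mathcal{A}|$); this case is trivial. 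So assume $\mathcal{A}$ is consistent and let $\mathcal{M}$ be the model built from $\overline{\mathcal{A}}$ as in Theorem~\ref{th:model}. The hypothesis that $q$ uses only concepts and individual names occurring in $\mathcal{A}$ is exactly what is needed to invoke Lemma~\ref{lem:universal model properties}.

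Second, I would dispatch the Boolean queries form by form, in each case rewriting ``$\mathcal{A} \models q$'' as ``a specific term is in $\overline{\mathcal{A}}$''. For a relationship query one first observes that $\mathcal{A} \models b I y$ is equivalent to $\mathcal{A} \models con(b) \sqsubseteq con(y)$: in every model $con(b)^{\mathcal{M}}$ is the concept generated by $b^{\mathcal{M}}$ and $con(y)^{\mathcal{M}}$ the concept generated by $y^{\mathcal{M}}$, so $b^{\mathcal{M}}$ lies in the extent of $con(y)^{\mathcal{M}}$ iff $b^{\mathcal{M}} I^{\mathcal{M}} y^{\mathcal{M}}$; likewise $\mathcal{A} \models b R_\Box y$ and $\mathcal{A} \models y R_\Diamond b$ are equivalent to $\mathcal{A} \models con(b) \sqsubseteq \Box con(y)$ and $\mathcal{A} \models \Diamond con(b) \sqsubseteq con(y)$ respectively. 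Items~1--3 of Lemma~\ref{lem:universal model properties} then turn these into the tests $b I y \in \overline{\mathcal{A}}$, $b R_\Box y \in \overline{\mathcal{A}}$, $y R_\Diamond b \in \overline{\mathcal{A}}$. Membership queries are handled identically: $\mathcal{A} \models b:C$ iff $\mathcal{A} \models con(b) \sqsubseteq C$ iff $b:C \in \overline{\mathcal{A}}$ (items~4 and 6), and $\mathcal{A} \models y::C$ iff $\mathcal{A} \models C \sqsubseteq con(y)$ iff $y::C \in \overline{\mathcal{A}}$ (items~5 and 7); a subsumption query is item~8 verbatim, $\mathcal{A} \models C_1 \sqsubseteq C_2$ iff $a_{C_1} I x_{C_2} \in \overline{\mathcal{A}}$. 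Since $|\overline{\mathcal{A}}|$ is polynomial in $|\mathcal{A}|$, each such test is a polynomial-time scan.

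Third, for a non-Boolean query $q(p)$ I would enumerate the individual names occurring in $\mathcal{A}$ --- at most $|\mathcal{A}|$ of them --- and, for each candidate $p$, run the corresponding Boolean test from the previous step, returning the set of $p$ for which it succeeds. Correctness is precisely the statement that $\mathcal{M}$ (equivalently, membership in $\overline{\mathcal{A}}$) is a universal model for each instantiation of $q$, which is the content of Lemma~\ref{lem:universal model properties}: its right-to-left implications give that anything recorded in $\overline{\mathcal{A}}$ is entailed by $\mathcal{A}$, and the left-to-right implications (which follow from $\mathcal{M}$ being a model of $\mathcal{A}$) give the converse. The total cost is polynomially many polynomial-time membership tests, hence polynomial in $|\mathcal{A}|$.

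I do not anticipate a genuine obstacle: once Theorem~\ref{th:model} and Lemma~\ref{lem:universal model properties} are in hand, this is essentially bookkeeping. The only points that need a little care are (i) matching each query form to the correct item of Lemma~\ref{lem:universal model properties} and, in the relational case, to the correct orientation of the relation (notably $y R_\Diamond b$ rather than $b R_\Diamond y$); (ii) the small equivalences such as ``$\mathcal{A} \models b:C$ iff $\mathcal{A} \models con(b) \sqsubseteq C$'', which hold because $con(b)^{\mathcal{M}}$ is the Galois closure of $\{b^{\mathcal{M}}\}$ and concept extents are Galois-stable; and (iii) treating inconsistent ABoxes separately, since Lemma~\ref{lem:universal model properties} is stated only for consistent $\mathcal{A}$.
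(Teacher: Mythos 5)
Your proposal is correct and takes essentially the same route as the paper, which offers no explicit proof beyond the remark that Algorithm~\ref{alg:main algo} runs in polynomial time and produces a polynomial-size model: you have simply spelled out the intended argument, reducing each query form to a membership test in $\overline{\mathcal{A}}$ via the appropriate item of Lemma~\ref{lem:universal model properties} and enumerating candidates for open queries. The points you flag as needing care (the bridge equivalences such as $\mathcal{A}\models b:C$ iff $\mathcal{A}\models con(b)\sqsubseteq C$, the orientation of $R_\Diamond$, and the separate treatment of inconsistent $\mathcal{A}$, which Lemma~\ref{lem:universal model properties} does not cover) are exactly the right ones, and your handling of them is sound.
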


\begin{remark}
    Relationship, membership, and subsumption queries can also be answered in polynomial time by converting them into a problem of consistency checking (see \cite{DLhandbook} for more details). However, it involves performing tableaux expansion for each query, while our result implies that we can answer multiple Boolean and naming queries with a single run of tableaux algorithm. 
\end{remark}

If a subsumption or membership query has concept $C$ not appearing in $\mathcal{A}$, we can answer such query by adding $C$ to $\mathcal{A}$ through creation rule i.e.~adding terms $a_C:C$, and $x_C::C$ to $\mathcal{A}$. If we have multiple queries consisting of concepts not appearing in $\mathcal{A}$, we can add all of these concepts simultaneously and answer all queries with a single run of the tableaux algorithm.  


\fakeparagraph{Disjunctive relationship and membership queries.} A disjunctive relationship (resp.~membership) query is formed by taking the 
disjunction of a finite number of relationship (resp.~membership) queries, e.g.~$q=b I y \vee b I z$, and $q(p)= p:C_1 \vee p:C_2$\footnote{The symbol $\vee$ in this paragraph refers to join in the first-order (meta) language, and not join of concepts in LE-$\mathcal{ALC}$.}. The following lemma implies that we can answer such queries in LE-$\mathcal{ALC}$ by answering each disjunct separately.

\begin{lemma}\label{lem:disjunctive queries}
Let $t_1$ and $t_2$ be any LE-$\mathcal{ALC}$ ABox terms not containing negation. Then, for any consistent LE-$\mathcal{ALC}$ ABox  $\mathcal{A}$,  $\mathcal{A} \models t_1 \vee t_2$  iff $\mathcal{A} \models t_1$ or $\mathcal{A} \models t_2 $.
\end{lemma}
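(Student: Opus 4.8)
The plan is to prove the nontrivial direction—if $\mathcal{A} \models t_1 \vee t_2$ then $\mathcal{A} \models t_1$ or $\mathcal{A} \models t_2$—by contraposition, and the other direction is immediate since a model of $t_i$ is a model of $t_1 \vee t_2$. So assume $\mathcal{A} \not\models t_1$ and $\mathcal{A} \not\models t_2$; I must produce a single model $\mathcal{M}$ of $\mathcal{A}$ with $\mathcal{M} \not\models t_1$ and $\mathcal{M} \not\models t_2$. The natural candidate is the canonical model $\mathcal{M}_{\overline{\mathcal{A}}}$ obtained from the tableaux completion $\overline{\mathcal{A}}$ via Algorithm~\ref{alg:main algo}, which by Theorem~\ref{th:model} is a model of $\mathcal{A}$. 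The key point is that this \emph{single} model already decides every positive ABox term: by Lemma~\ref{lem:universal model properties} (together with Theorem~\ref{th:model} for relational and membership terms), for each positive term $t$ of the relevant shapes we have $\mathcal{M}_{\overline{\mathcal{A}}} \models t$ iff the corresponding assertion lies in $\overline{\mathcal{A}}$ iff $\mathcal{A} \models t$. Hence $\mathcal{A} \not\models t_i$ gives $\mathcal{M}_{\overline{\mathcal{A}}} \not\models t_i$ for $i = 1,2$, and since $\mathcal{M}_{\overline{\mathcal{A}}} \models \mathcal{A}$ this witnesses $\mathcal{A} \not\models t_1 \vee t_2$.

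Carrying this out, I would first note that the six syntactic shapes of a negation-free ABox term ($aIx$, $aR_\Box x$, $xR_\Diamond a$, $a:C$, $x::C$) are exactly the cases covered—relational terms by the last clause of Theorem~\ref{th:model} and by Lemma~\ref{lem:universal model properties}(1)--(3) read as membership statements via concept companions, and the concept-membership terms $a:C$, $x::C$ by Lemma~\ref{lem:universal model properties}(6)--(7). Concretely: $\mathcal{A} \models a:C$ iff (by Lemma~\ref{lem:universal model properties}(6), since $con(a)^{\mathcal{M}}$ is the concept generated by $a^{\mathcal{M}}$, and $a^{\mathcal{M}} \in \val{con(a)^{\mathcal{M}}}$) $a:C \in \overline{\mathcal{A}}$, and by construction of $\mathcal{M}_{\overline{\mathcal{A}}}$ this holds iff $\mathcal{M}_{\overline{\mathcal{A}}} \models a:C$; likewise for $x::C$; and for a relational term such as $aR_\Box x$, $\mathcal{A} \models aR_\Box x$ iff $aR_\Box x \in \overline{\mathcal{A}}$ (Lemma~\ref{lem:universal model properties}(2), or directly Theorem~\ref{th:model}) iff $\mathcal{M}_{\overline{\mathcal{A}}} \models aR_\Box x$. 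Assembling these equivalences for $t_1$ and $t_2$ separately gives the claim.

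The one subtlety I want to flag is soundness of reading $\mathcal{A} \models t$ off a \emph{single} model: this is precisely the content of Theorem~\ref{th:model}, which asserts the constructed $\mathcal{M}_{\overline{\mathcal{A}}}$ is a model of $\mathcal{A}$, together with the ``moreover'' clause and Lemma~\ref{lem:universal model properties}, which show it is \emph{universal} for exactly the positive terms in question—so membership in $\overline{\mathcal{A}}$, truth in $\mathcal{M}_{\overline{\mathcal{A}}}$, and entailment from $\mathcal{A}$ all coincide on these terms. I also want to record explicitly that the argument genuinely uses the hypothesis that $t_1, t_2$ are negation-free: the universality equivalences of Lemma~\ref{lem:universal model properties} are one-directional in spirit for negative terms (indeed the paper later observes the canonical model is \emph{not} universal for negative membership or subsumption queries), so the disjunction lemma would fail if $t_i$ were allowed to be negated. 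Thus the main—and only—obstacle is bookkeeping: matching each admissible shape of $t_i$ to the right clause of Lemma~\ref{lem:universal model properties} or Theorem~\ref{th:model}; there is no genuine difficulty beyond invoking universality of the canonical model, which is already available.
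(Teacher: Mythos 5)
Your proof is correct, but it takes a genuinely different route from the paper's. The paper argues syntactically: it rewrites $\mathcal{A}\models t_1\vee t_2$ as inconsistency of $\mathcal{A}\cup\{\neg t_1,\neg t_2\}$ and then inspects the tableaux completion of that set, observing that the negative terms $\neg t_1,\neg t_2$ are inert under expansion (only the rules $\neg b$, $\neg x$ touch them, and they never interact with each other), so the completion is just $\overline{\mathcal{A}}\cup B$ and any clash must pair a term of $\overline{\mathcal{A}}$ with $\neg t_1$ or with $\neg t_2$ alone --- whence $\mathcal{A}\cup\{\neg t_1\}$ or $\mathcal{A}\cup\{\neg t_2\}$ is already inconsistent. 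You instead argue semantically by contraposition: the canonical model built from $\overline{\mathcal{A}}$ is a \emph{single} model of $\mathcal{A}$ that decides every positive ABox term in agreement with entailment (Theorem~\ref{th:model} plus Lemma~\ref{lem:universal model properties}), so it simultaneously refutes $t_1$ and $t_2$ and hence refutes $t_1\vee t_2$. Your route leans on the full strength of the right-to-left (inductive) direction of Lemma~\ref{lem:universal model properties} but is shorter once that lemma is in hand, and it makes the role of the negation-freeness hypothesis transparent (the canonical model is not universal for negative terms, which is exactly why the lemma fails there); the paper's route is more self-contained at this point in the text and localizes the reason for the lemma in the shape of the expansion rules rather than in universality. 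One cosmetic slip: you announce ``six syntactic shapes'' of a negation-free term but correctly list the five ($aIx$, $aR_\Box x$, $xR_\Diamond a$, $a:C$, $x::C$); this does not affect the argument.
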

\begin{proof}
    $\mathcal{A} \models t_1 \vee t_2$ iff $\mathcal{A} \cup  \{ \neg t_1,  \neg  t_2\}$ is inconsistent. By tableaux algorithm for  LE-$\mathcal{ALC}$, we have $\overline{\mathcal{A} \cup  \{ \neg t_1,  \neg  t_2\}}= \overline{\mathcal{A}} \cup  B$, {where the only terms in $B$ are $ \neg t_1$, $\neg t_2$, and the terms obtained by applying the negative rule $\neg b$ or $\neg x$ to these terms.} Therefore, as $\mathcal{A} \cup \{ \neg t_1,  \neg  t_2\}$ is inconsistent, $\overline{\mathcal{A}} \cup  B$ must contain a clash. But as $\mathcal{A}$ is consistent, $\overline{\mathcal{A}}$ does not contain any clash. Therefore, some term in $\overline{\mathcal{A}}$ clashes with $\neg t_1$ or $\neg t_2$ or the term obtained by applying negative rule $\neg b$ or $\neg x$ to these terms. This implies that $\mathcal{A} \cup\{ \neg t_1\}$ or $\mathcal{A} \cup  \{ \neg t_1\}$ must be inconsistent. Therefore, we have $\mathcal{A} \models t_1$ or $\mathcal{A} \models t_2 $. 
\end{proof}

\subsection{Negative queries}
Negative queries are obtained by applying negation to relationship, membership and subsumption queries discussed above. These queries ask if the given ABox implies that some object is not related to some feature or some object (resp.~feature) does not belong to some concept, or that one concept is not included in another concept. We start with negative relationship queries.

\begin{lemma}\label{lem:negative relation}
For any consistent LE-$\mathcal{ALC}$ ABox $\mathcal{A}$ and for any individual names $b$ and $y$,

1.~$ \mathcal{A} \models \neg (b I y)$ iff  $ \neg (b I y)\in\mathcal{A}$, 

2.~$ \mathcal{A} \models \neg (b R_\Box y)$ iff  $\neg (b R_\Box y) \in \mathcal{A}$,

3.~$\mathcal{A} \models  \neg (y R_\Diamond b)$ iff $\neg (y R_\Diamond b) \in \mathcal{A}$. 

\end{lemma}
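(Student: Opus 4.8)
The plan is to prove all three items simultaneously by the same argument, since they are structurally identical (relational terms of the three types $I$, $R_\Box$, $R_\Diamond$). The right-to-left direction is trivial: if $\neg(b I y) \in \mathcal{A}$ then every model of $\mathcal{A}$ satisfies $\neg(bIy)$ by the definition of $\models$, and similarly for the other two. So the content is in the left-to-right direction, and the key tool is the same syntactic observation about tableaux completions that was used in the proof of Lemma~\ref{lem:disjunctive queries}: adding a single negative relational term to a consistent ABox does not trigger any expansion rule except possibly the negative rules $\neg b$ or $\neg x$, and in the case of a \emph{relational} term $\beta$ (one of $bIy$, $bR_\Box y$, $yR_\Diamond b$), the term $\neg\beta$ is not of the form $\neg(b':C)$ or $\neg(x'::C)$, so even those rules do not apply. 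Hence $\overline{\mathcal{A} \cup \{\neg\beta\}} = \overline{\mathcal{A}} \cup \{\neg\beta\}$.

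The argument then runs as follows. Suppose $\mathcal{A} \models \neg\beta$ where $\beta$ is a relational term; this means $\mathcal{A} \cup \{\beta\}$ is inconsistent. I would first handle the case $\neg\beta \notin \mathcal{A}$ and derive a contradiction — equivalently, assume $\neg\beta \notin \mathcal{A}$ and show $\mathcal{A} \cup \{\beta\}$ is consistent. By the tableaux algorithm, $\overline{\mathcal{A}\cup\{\beta\}}$ differs from $\overline{\mathcal{A}}$ only by the terms added by expansion rules triggered by $\beta$; since $\beta$ is a relational term, the rules potentially triggered are the $I$-compatibility rules, the adjunction rules, the appending rules $x_C, a_C$, and the basic rule in reverse — all of which only add \emph{positive} terms (more relational terms, or membership terms $b:C$, $y::C$, or new relational terms involving the adjoint individuals). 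Crucially, none of these rules can ever produce a term of the form $\neg\gamma$. Therefore $\overline{\mathcal{A}\cup\{\beta\}}$ contains no negative terms beyond those already in $\overline{\mathcal{A}}$, and since $\overline{\mathcal{A}}$ is clash-free (as $\mathcal{A}$ is consistent), a clash in $\overline{\mathcal{A}\cup\{\beta\}}$ would have to pit $\beta$ itself, or one of the new positive terms derived from it, against a negative term $\neg\gamma$ already present in $\overline{\mathcal{A}}$.

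At this point I would need to rule out such a clash, and this is where I expect the main subtlety. The clash would require $\neg\gamma \in \overline{\mathcal{A}}$ for some relational term $\gamma$ that is produced in $\overline{\mathcal{A}\cup\{\beta\}}$ but not in $\overline{\mathcal{A}}$. Since all negative terms in $\overline{\mathcal{A}}$ come ultimately from negative terms in $\mathcal{A}$ (possibly via $\neg b$, $\neg x$), and since $\mathcal{A}\models\neg\beta$ is exactly what we are characterizing, the cleanest route is probably the contrapositive together with monotonicity of the tableaux construction: I would argue that if $\neg\beta \notin \mathcal{A}$, one can explicitly exhibit a model of $\mathcal{A}$ satisfying $\beta$ — namely, take the model $\mathcal{M}$ built from $\overline{\mathcal{A}\cup\{\beta\}}$ by the construction of Theorem~\ref{th:model} (this is well-defined and a model of $\mathcal{A}\cup\{\beta\}$ precisely when that ABox is consistent), reducing the whole question to: \emph{$\mathcal{A}\cup\{\beta\}$ is consistent whenever $\mathcal{A}$ is consistent and $\neg\beta\notin\mathcal{A}$}. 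The hard part is verifying that no clash arises, i.e.\ that adding $\beta$ and closing off cannot regenerate the negation of any relational term already asserted negatively; I would do this by tracking, rule by rule, which new relational terms $\beta$ can force into the completion (only finitely many, all "downstream" of $\beta$ under the $I$-compatibility and adjunction rules), and observing that for each such term $\gamma$, its negation $\neg\gamma$ being in $\overline{\mathcal{A}}$ would already have forced $\overline{\mathcal{A}}$ to be inconsistent or would contradict $\neg\beta\notin\mathcal{A}$ — completing the induction. This mirrors, and is slightly more delicate than, the clash analysis in Lemma~\ref{lem:disjunctive queries}, because here we must account for the forward closure under the relational rules rather than a single isolated negative term.
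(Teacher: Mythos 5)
Your plan follows essentially the same route as the paper: reduce $\mathcal{A}\models\neg\beta$ to inconsistency of $\mathcal{A}\cup\{\beta\}$, observe that the completion of $\mathcal{A}\cup\{\beta\}$ is $\overline{\mathcal{A}}$ plus a small, explicitly computable set of positive terms, and locate the clash. The one place where your sketch is off is the concluding justification: you say that for each newly generated relational term $\gamma$, having $\neg\gamma\in\overline{\mathcal{A}}$ ``would already have forced $\overline{\mathcal{A}}$ to be inconsistent or would contradict $\neg\beta\notin\mathcal{A}$,'' but neither disjunct is the actual reason. The correct observations (which the paper makes explicitly) are: (i) for original names $b,y$, the term $b I y$ triggers \emph{no} rule at all, and $b R_\Box y$ triggers only the adjunction/$I$-compatibility loop adding exactly $\Diamondblack b I y$ and $b I \Box y$; (ii) the only rules producing negative terms are $\neg b$ and $\neg x$, whose outputs mention classifying individuals $x_C$, $a_C$, so no expansion rule can ever place $\neg(b I y)$, $\neg(\Diamondblack b I y)$, or $\neg(b I \Box y)$ into $\overline{\mathcal{A}}$; and (iii) the latter two cannot occur in the original $\mathcal{A}$ because the names $\Diamondblack b$, $\Box y$ do not occur there. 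Hence the clashing negative term must be $\neg\beta$ itself and must already lie in $\mathcal{A}$; no induction over a ``downstream closure'' is needed, since the set of new terms is just the two (or zero) terms above.
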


\begin{proof}
We only prove items 1 and 2. The proof for item 3 is similar. For item 1, the right to left implication is trivial. For the left to right implication, suppose $ \mathcal{A} \models \neg (b I y)$. Then, $\mathcal{A} \cup \{b I y\}$  must be inconsistent. As $b$ and $y$ appear in $\mathcal{A}$, no tableaux expansion rule has the term $b I y$ in its premise. Therefore, the tableaux completion of $\mathcal{A} \cup \{b I y\}$ is $\overline{\mathcal{A}} \cup \{b I y\}$. As $\mathcal{A}$ is consistent, $\overline{\mathcal{A}}$ does not contain clash. Therefore, since $\overline{\mathcal{A}} \cup \{b I y\}$ must contain a clash, we have $\neg (b I y) \in \overline{\mathcal{A}}$. However, note that no tableaux expansion rule can add such term for individual names $b$, $y$ appearing in the original ABox $\mathcal{A}$. Therefore, $\neg (b I y) \in {\mathcal{A}}$.

For item 2, the right to left implication is also trivial. For the left to right implication, suppose $\mathcal{A} \models \neg (b R_\Box y)$. Then, $\mathcal{A} \cup \{b R_\Box y\}$  must be inconsistent. As $b$ and $y$ appear in $\mathcal{A}$, the only tableaux expansion rule having term $b R_\Box y$ in its premise is adjunction rule $R_\Box$ which adds terms $\Diamondblack b I y$ and $b I \Box y$. Again, the only rules that have any of these terms in premise are $I$-compatibility rules that add $b R_\Box y$ to the tableaux expansion. Therefore, the tableaux completion of $\mathcal{A} \cup \{b R_\Box y\}$ is $\overline{\mathcal{A}} \cup \{b R_\Box y, \Diamondblack  b I y, b I \Box y\}$. As $\mathcal{A}$ is consistent, $\overline{\mathcal{A}}$ does not contain a clash. Therefore, as $\overline{\mathcal{A}}\cup\{b R_\Box y, \Diamondblack b I y, b I \Box y\}$ must contain a clash, one of the terms $\neg (b R_\Box y)$, $\neg (\Diamondblack  b I y) $ or $\neg ( b I \Box y)$ must be in $\overline{\mathcal{A}}$. However, no expansion rule can add the terms of any of these forms. Furthermore, the terms of the form $\neg (\Diamondblack b I y) $ or $\neg (b I \Box y)$ cannot appear in the original ABox $\mathcal{A}$. Therefore, $\neg (b R_\Box y)$ must be in $\mathcal{A}$. 
\end{proof}

As a result, we can answer negative relationship queries over a consistent  LE-$\mathcal{ALC}$ ABox $\mathcal{A}$ in linear time by searching through $\mathcal{A}$. 

We cannot apply a similar strategy to membership queries of the form $\neg (b:C)$ or $\neg (y::C)$, as such terms can be implied by $\mathcal{A}$ without being present in $\overline{\mathcal{A}}$. For example, consider ABox $\mathcal{A} = \{b:C_1, \neg (b:C_1 \wedge C_2)\}$ which implies $\neg (b:C_2)$, but this term does not appear in 
$\overline{\mathcal{A}}$. This means that the model obtained by tableaux algorithm is not a universal model for these types of queries. Hence, to answer queries of the form $\neg (b:C)$ or $\neg (y::C)$, we must proceed by the usual route of adding the terms $b:C$ or $y::C$ to $\mathcal{A}$ and checking the consistency of the resulting ABox.
We can also consider negative subsumption queries, i.e.~queries asking whether the given ABox $\mathcal{A}$ implies one concept $C_1$ is not included in another concept $C_2$, denoted as $\neg (C_1\sqsubseteq C_2)$.
Answering this query is the same as answering if the knowledge base obtained by adding the TBox axiom $C_1\sqsubseteq C_2$ to ABox $\mathcal{A}$  is consistent. We can answer these queries for any TBox term $C_1\sqsubseteq C_2$, such that no sub-formula of $C_1$ appears in $C_2$, by using unraveling on $C_1\sqsubseteq C_2$, and then applying Algorithm \ref{alg:main algo}. 

In this and previous sections, we have discussed answering Boolean queries of all the forms which an LE-$\mathcal{ALC}$ ABox term can take. Hence, we can combine these methodologies to answer ontology {\em equivalence queries} asking if two ABoxes are equivalent, i.e. $\mathcal{A}_1 \equiv \mathcal{A}_2$, by checking if every term in $\mathcal{A}_2$ is implied by $\mathcal{A}_1 $ and vice versa.

\subsection{Separation and differentiation queries}\label{ssuc:separation queries}
An important set of queries is queries  asking if the given knowledge base implies (ensures) that two individuals can be differentiated from each other by a certain property. In this section, we consider some queries of this type in LE-$\mathcal{ALC}$. 

{\em Separation queries} are queries of the form $S(b,d)=\exists p (b I p \wedge \neg (d I p))$ or $S(y,z)= \exists p (p I y \wedge \neg (p I z)) $  for two object (resp.~feature) names $b$, $d$ (resp.~$y$, $z$) appearing in a given ABox. These queries can be understood as asking whether two given objects or features can be separated for sure using relation $I$ based on the given knowledge base. Note that for any 
LE-$\mathcal{ALC}$ ABox $\mathcal{A}$,

1.~$\mathcal{A} \not \models \exists p(b I p \wedge \neg (d I p))$  iff $\mathcal{A}  \cup \{\forall p(b I p \Rightarrow d I p)\}$ is consistent, and 

2.~$\mathcal{A} \not \models \exists p (p I y \wedge \neg (p I z))$  iff $\mathcal{A}  \cup \{\forall p (p I y \Rightarrow p I z)\}$ is consistent. 

Therefore, a separation query $S(b,d)$ (resp.~$S(y,z)$) can be answered by checking if $\mathcal{A}$ is consistent in the extension of LE-$\mathcal{ALC}$ with the axiom $\forall p (b I p \Rightarrow d I p)$ (resp.~$\forall p (p I y \Rightarrow p I z)$). To this end, we consider the expansion of the LE-$\mathcal{ALC}$ tableaux algorithm with the rules 

{{\centering
\begin{tabular}{cc}
\AXC{$b I x$}
\LL{\fns $SA(b,d)$}
\UIC{$d I x$}
\DP 
\ & \
\AXC{$a I y$}
\RL{\fns $SX(y,z)$}
\UIC{$a I z$}
\DP.
\end{tabular}
\par
}}

\vspace{-0.1cm}
\begin{theorem}\label{thm:seperation}
The tableaux algorithm obtained by adding the rule $SA(b,d)$ (resp.~$SX(y,z)$) to the LE-$\mathcal{ALC}$ tableaux expansion rules provides a polynomial-time sound and complete decision procedure for checking the consistency of $\mathcal{A}  \cup \forall p (b I p \Rightarrow d I p) $(resp.~$\mathcal{A}  \cup \forall p(p I y \Rightarrow p I z)$). 
\end{theorem}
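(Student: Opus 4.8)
The strategy is to verify the three standard properties of a tableaux calculus — termination in polynomial time, soundness, and completeness — for the system obtained by adding the single rule $SA(b,d)$ (the case $SX(y,z)$ is dual and handled analogously). The key observation driving everything is that $SA(b,d)$ only ever produces atomic incidence terms of the form $d I x$, i.e. it neither introduces new individual names nor new concept names, and it fires on premises $b I x$ that are already produced by the underlying LE-$\mathcal{ALC}$ algorithm. So the additional machinery is lightweight and the whole argument reduces to controlling the interaction between the new rule and the existing rules.

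\textbf{Termination and complexity.} First I would argue that the completion still terminates in polynomial time. By Theorem~\ref{th:model}, the underlying LE-$\mathcal{ALC}$ algorithm already produces a completion $\overline{\mathcal{A}}$ of size polynomial in $|\mathcal{A}|$ using a polynomial number of rule applications, and in particular only polynomially many object and feature names $A$, $X$ ever appear. The rule $SA(b,d)$ introduces no fresh names, so the set of individuals remains exactly the one fixed by the original run; hence the number of possible incidence terms $d I x$ (for $x$ ranging over the feature names present) is polynomial, and $SA(b,d)$ can fire at most that many times. The subtle point to check is that firing $SA(b,d)$ can in turn feed the $I$-compatibility / appending / adjunction rules (e.g.\ a new $d I x_C$ triggers the appending rule $x_C$, producing $d : C$, which may cascade), but since every cascade stays within the fixed finite signature of names and concepts, the standard termination measure used in \cite{van2023non} still applies: each rule strictly increases the (bounded) set of derived terms, so the process halts after polynomially many steps. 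I would phrase this as: run Algorithm~\ref{alg:main algo} to saturation interleaved with $SA(b,d)$, and observe the measure from \cite{van2023non} is unaffected.

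\textbf{Soundness.} For soundness I would show that if the extended completion $\overline{\mathcal{A}}^{+}$ is clash-free, then the model $\mathcal{M}$ read off from it (as in Theorem~\ref{th:model}) satisfies both $\mathcal{A}$ and the axiom $\forall p(b I p \Rightarrow d I p)$. The first conjunct is immediate since $\overline{\mathcal{A}}^{+} \supseteq \overline{\mathcal{A}}$ and the model construction is the same. For the axiom: by definition $b I^{\mathcal{M}} x$ iff $b I x \in \overline{\mathcal{A}}^{+}$; but the rule $SA(b,d)$ is saturated, so $b I x \in \overline{\mathcal{A}}^{+}$ forces $d I x \in \overline{\mathcal{A}}^{+}$, i.e.\ $d I^{\mathcal{M}} x$, which is exactly $\forall p(b I p \Rightarrow d I p)$. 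Conversely, each new rule application corresponds to a valid entailment — $d I p$ genuinely follows from $b I p$ together with the axiom — so no clash-free ABox is wrongly declared inconsistent. The only thing to double-check here is that adding $SA(b,d)$-consequences does not destroy the $I$-compatibility of $R_\Box, R_\Diamond$ in the constructed $\mathbb{F}$; but this is inherited because the $I$-compatibility rules are themselves part of the saturated system and continue to fire on any terms the new rule produces, so Galois-stability of the relevant sets is preserved exactly as in \cite{van2023non}.

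\textbf{Completeness.} For completeness, suppose $\mathcal{A} \cup \{\forall p(b I p \Rightarrow d I p)\}$ is inconsistent; I must show the extended algorithm finds a clash. I would argue contrapositively: if the extended completion $\overline{\mathcal{A}}^{+}$ is clash-free, then by the soundness paragraph it yields a model of $\mathcal{A}$ satisfying the axiom, contradicting inconsistency. This is the cleanest route and it reduces completeness to the model-construction argument, which is the content I am allowed to import. \textbf{The main obstacle} I anticipate is precisely making the model-construction step airtight in the presence of $SA(b,d)$: one must confirm that the finitely many terms added by the new rule, together with everything they trigger, still form a \emph{saturated} LE-$\mathcal{ALC}$-plus-$SA$ set from which a genuine enriched formal context can be extracted — in particular that no infinite regress arises and that the classifying-object/feature machinery ($a_C$, $x_C$) still behaves as in Theorem~\ref{th:model} for all concepts $C$ occurring in $\mathcal{A}$. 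I would handle this by noting that $SA(b,d)$ commutes appropriately with the other rules (it only ever adds $I$-terms on the fixed name $d$) so that the invariants proved in \cite{van2023non} — termination measure, clash-characterization, and the biconditional $b \in \val{C}$ iff $b I x_C \in \overline{\mathcal{A}}$ — all go through verbatim with $\overline{\mathcal{A}}^{+}$ in place of $\overline{\mathcal{A}}$.
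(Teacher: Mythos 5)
Your high-level decomposition (termination, soundness, completeness, all by re-using the machinery of \cite{van2023non}) matches the paper's, but two of the three legs rest on claims that are not actually true or are not argued where an argument is needed.

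\textbf{Termination.} Your assertion that ``$SA(b,d)$ introduces no fresh names, so the set of individuals remains exactly the one fixed by the original run'' is false, and the subsequent claim that every cascade ``stays within the fixed finite signature'' does not repair it. A single application of $SA(b,d)$ to a term $b\,I\,\Box y$ yields $d\,I\,\Box y$, the $I$-compatibility rule then yields $d\,R_\Box\,y$, and the adjunction rule then introduces the \emph{new} individual name $\Diamondblack d$ (and in principle the cascade can continue through appending and further adjunction applications). So the extended completion is not confined to the names of $\overline{\mathcal{A}}$, and termination is not a consequence of a fixed signature; it is a consequence of the modal-depth bounds of \cite[Lemma~1]{van2023non}, which must be re-verified for the new rule. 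The verification is short but it is the actual content: since $b$ and $d$ both occur in the original ABox, $\Box_{\mathcal{D}}(b)=\Box_{\mathcal{D}}(d)=\Diamond_{\mathcal{D}}(b)=\Diamond_{\mathcal{D}}(d)=0$, so replacing $b$ by $d$ in an $I$-term preserves all the inequalities in the invariant, and the polynomial bound on the number of derivable terms survives. You never make this observation.

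\textbf{Completeness.} Your ``completeness'' paragraph argues the contrapositive ``clash-free $\Rightarrow$ consistent'', which is the same direction you already established in the soundness paragraph by reading a model off the saturated completion. The remaining direction --- consistent $\Rightarrow$ clash-free --- is carried in your text only by the phrase ``each new rule application corresponds to a valid entailment'', and that is precisely where the argument fails for this calculus: the creation and adjunction rules introduce brand-new individuals ($a_C$, $x_C$, $\Diamondblack b$, $\Box y$, \dots) whose assertions are not entailed by the ABox in any model but only satisfiable in a suitably \emph{extended} model. The paper's proof therefore re-establishes the model-extension lemma (\cite[Lemma~5]{van2023non}) under the additional hypothesis that the starting model satisfies $\forall p\,(b\,I\,p \Rightarrow d\,I\,p)$, and the one point that genuinely needs checking is that the extension preserves this axiom: a new incidence $b\,I'\,x_C$ is created only when $b\,I\,y$ for every $y\in \descr{C^{\mathcal{M}}}$, whence the axiom in $\mathcal{M}$ gives $d\,I\,y$ for all such $y$ and therefore $d\,I'\,x_C$. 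This check is absent from your proposal, and without it the claim that the invariants of \cite{van2023non} ``go through verbatim'' is an assumption rather than a proof.
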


Now, we will prove the termination, soundness and completeness of tableaux algorithms for checking consistency of $\mathcal{A}  \cup \{\forall p (b I p \Rightarrow d I p)$\}  and  $\mathcal{A}  \cup \{\forall p (p I y \Rightarrow p I z)\}$ defined in Section \ref{ssuc:separation queries}. We only give the proof for $\mathcal{A} \cup \forall p (b I p \Rightarrow d I p)$, the proof for $\mathcal{A}  \cup \forall p (p I y \Rightarrow p I z)$ would be similar.

In this paper, we will only explain the changes that must be made to the termination, soundness, and completeness proofs of the LE-$\mathcal{ALC}$ tableaux algorithm provided in \cite{van2023non}. We refer to \cite{van2023non} for details of these proofs.

\fakeparagraph{Termination.} To prove termination, we prove that the following lemma proved for LE-$\mathcal{ALC}$ tableaux algorithm in \cite{van2023non} also holds for its extension with rule $SA(b,d)$. 

\begin{lemma} \cite[Lemma 1]{van2023non}\label{lem:depth}
For any individual names $b$, and $y$, and concept $C$  added during tableau expansion of ${\mathcal{A}}$, 
\begin{equation}\label{eq:IH 1}
 \Box_{\mathcal{D}}(C) \leq  \Box_{\mathcal{D}}(\mathcal{A})+1 \, 
 \mbox{ and } \, \Diamond_{\mathcal{D}}(C) \leq  \Diamond_{\mathcal{D}}(\mathcal{A})+1,  
\end{equation}

\begin{equation}\label{eq:IH 2}
-\Diamond_{\mathcal{D}}(\mathcal{A} )-1 \leq \Diamond_{\mathcal{D}} (b) \, \mbox{ and } \,   \Box_{\mathcal{D}}(b) \leq  \Box_{\mathcal{D}}(\mathcal{A})+1, 
\end{equation}

\begin{equation}\label{eq:IH 3}
-\Box_{\mathcal{D}}(\mathcal{A})-1 \leq \Box_{\mathcal{D}}(y)\, \mbox{ and } \, \Diamond_{\mathcal{D}}(y) \leq  \Diamond_{\mathcal{D}}(\mathcal{A})+1
\end{equation}
\end{lemma}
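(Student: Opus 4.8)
The plan is to adapt the termination argument of \cite{van2023non}, Lemma~\ref{lem:depth}, to the extended tableaux calculus with the rule $SA(b,d)$. The quantities $\Box_{\mathcal{D}}(\cdot)$ and $\Diamond_{\mathcal{D}}(\cdot)$ are (modal-)depth measures assigned to concepts and individual names; the original lemma shows that throughout the expansion these measures stay within a bounded window around the measures of the input ABox $\mathcal{A}$, and this bound is what ultimately yields termination (only finitely many individuals and concepts of bounded depth can be created). Since the statement of the lemma is literally the same, the only thing to check is that the new rule $SA(b,d)$ does not break the inductive invariant that each original rule preserves.

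The key steps, in order, would be: First, recall the definitions of $\Box_{\mathcal{D}}$ and $\Diamond_{\mathcal{D}}$ on concepts and on the syntactic individuals $b$, $y$, $\Diamondblack b$, $\Box y$, etc., as given in \cite{van2023non}; in particular note how these measures are defined on object and feature names occurring in $\mathcal{A}$ (they get a base value, typically $0$, or a value determined by the concepts they are classifying). Second, redo the induction on the number of expansion rules applied exactly as in the original proof; all the cases for the LE-$\mathcal{ALC}$ rules are unchanged and may be cited verbatim. Third, handle the single new case: the rule $SA(b,d)$ takes a term $bIx$ and produces $dIx$. Since this rule introduces \emph{no new individual names and no new concepts} — both $d$ and $x$ already occur (indeed $d$ occurs in $\mathcal{A}$ by the side condition on the query, and $x$ occurs because $bIx$ was already present) — the measures $\Box_{\mathcal{D}}$, $\Diamond_{\mathcal{D}}$ of every individual and concept in the expansion are entirely unaffected by an application of $SA(b,d)$. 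Hence inequalities \eqref{eq:IH 1}, \eqref{eq:IH 2}, \eqref{eq:IH 3} are trivially preserved in this case, completing the induction.

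Because no obstacle of substance arises — the rule is "flat" in that it only adds incidence facts between already-present individuals — I do not expect a genuinely hard part here; the work is essentially bookkeeping, confirming that the feature $x$ in the conclusion $dIx$ is indeed drawn from the existing carrier and that the object $d$ is fixed by the query rather than freshly generated. The only mild subtlety is to observe that although $SA(b,d)$ can fire repeatedly and interact with the adjunction and $I$-compatibility rules (e.g.\ a newly derived $dIx$ with $x = \Box y$ can trigger rule $\Box y$ to produce $d R_\Box y$), every individual so reached was already reachable from the original ABox via the unextended calculus, so no depth beyond the original window is ever attained; this is exactly what guarantees that the bounded-window invariant of Lemma~\ref{lem:depth}, and therefore termination, continues to hold. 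Once Lemma~\ref{lem:depth} is re-established, the remaining polynomial-time bound follows as in \cite{van2023non} since the new rule adds at most $O(|\mathcal{A}|)$ incidence terms.
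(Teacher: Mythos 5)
Your overall strategy coincides with the paper's: induct on the number of rule applications, cite the cases for the original LE-$\mathcal{ALC}$ rules from \cite{van2023non}, and use the fact that $d$ is an object name occurring in the original ABox, so that $\Box_{\mathcal{D}}(d)=\Diamond_{\mathcal{D}}(d)=0=\Box_{\mathcal{D}}(b)=\Diamond_{\mathcal{D}}(b)$. However, the justification you give for why downstream rule applications stay inside the window is not the right one. The invariant that actually carries the induction (here and in \cite{van2023non}) is not the lemma statement itself but a stronger, term-level claim bounding \emph{differences} of measures across each term, e.g.\ $\Box_{\mathcal{D}}(b)-\Box_{\mathcal{D}}(y)\leq \Box_{\mathcal{D}}(\mathcal{A})+1$ for every $b I y$ in the expansion. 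The new case must verify that the produced term $d I y$ satisfies this relational invariant; that is immediate from the invariant for $b I y$ together with $\Box_{\mathcal{D}}(d)=\Box_{\mathcal{D}}(b)$ and $\Diamond_{\mathcal{D}}(d)=\Diamond_{\mathcal{D}}(b)$, but it is not the same as observing that ``the measures of every individual are unaffected,'' which speaks only about individuals already present and not about the terms the induction is tracking.

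More importantly, your claim that ``every individual so reached was already reachable from the original ABox via the unextended calculus'' --- which you present as exactly what guarantees the bounded window --- is false. Take $\mathcal{A}$ containing $b R_\Box y$ and an object $d$ with no relational assertions. The unextended completion never creates $\Diamondblack d$. With $SA(b,d)$, adjunction first yields $b I \Box y$, then $SA(b,d)$ yields $d I \Box y$, the $I$-compatibility rule yields $d R_\Box y$, and adjunction then creates the genuinely new individual $\Diamondblack d$. The lemma still holds for $\Diamondblack d$ (its measure stays in the window because $d$'s is zero), but not for the reason you give. The term-level invariant is precisely the device that makes this automatic: once $d I y$ is shown to satisfy it, every subsequent rule application falls under the already-proved cases, and no separate reachability argument is needed.
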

\begin{proof}
 The proof proceeds by showing that the following stronger claim holds. For any tableaux expansion  $ \overline{\mathcal{A}}$, obtained from $\mathcal{A}$ after any finite number of expansion steps:

1.~For any term $b I y \in \overline{\mathcal{A}}$, $\Box_{\mathcal{D}} (b) -  \Box_{\mathcal{D}} (y) \leq \Box_{\mathcal{D}}(\mathcal{A})+1$, and $ \Diamond_{\mathcal{D}} (y) -  \Diamond_{\mathcal{D}} (b) \leq \Diamond_{\mathcal{D}}(\mathcal{A})+1$.

2.~For any term $b R_\Box y \in \overline{\mathcal{A}}$, $\Box_{\mathcal{D}} (b) +1 -  \Box_{\mathcal{D}} (y) \leq \Box_{\mathcal{D}}(\mathcal{A})+1$, and $ \Diamond_{\mathcal{D}} (y) -  \Diamond_{\mathcal{D}} (b) \leq \Diamond_{\mathcal{D}}(\mathcal{A})+1$.

3.~For any term $y R_\Diamond b \in \overline{\mathcal{A}}$, $\Box_{\mathcal{D}} (b) -  \Box_{\mathcal{D}} (y) \leq \Box_{\mathcal{D}}(\mathcal{A})+1$, and $ \Diamond_{\mathcal{D}} (y) +1  -  \Diamond_{\mathcal{D}} (b) \leq \Diamond_{\mathcal{D}}(\mathcal{A})+1$.

4.~For any term $b:C \in \overline{\mathcal{A}}$, $ \Box_{\mathcal{D}} (b)  +  \Box_{\mathcal{D}} (C) \leq \Box_{\mathcal{D}}(\mathcal{A})+1$, and $ -\Diamond_{\mathcal{D}} (b) -  \Diamond_{\mathcal{D}} (C) \leq 0$.

5.~For any term $y::C \in \overline{\mathcal{A}}$, $ -\Box_{\mathcal{D}} (y)  - \Box_{\mathcal{D}} (C) \leq 0$, and $ \Diamond_{\mathcal{D}} (y) +  \Diamond_{\mathcal{D}} (C) \leq \Diamond_{\mathcal{D}}(\mathcal{A})+1$.

The proof proceeds by induction of number of rules applied. The proofs for initial case (i.e.~for all the terms in original ABox $\mathcal{A}$) and all the LE-$\mathcal{ALC}$  tableaux expansion rule are provided in \cite[Lemma 1]{van2023non}. Therefore, to complete the proof we need to show that the rule  $SA(b,d)$ also preserves these properties. Suppose a term $d I y$ is added from a term $b I y$ using rule $SA(b,d)$. In this case, by induction, we have $\Box_{\mathcal{D}} (b) -  \Box_{\mathcal{D}} (y) \leq \Box_{\mathcal{D}}(\mathcal{A})+1$, and $ \Diamond_{\mathcal{D}} (y) -  \Diamond_{\mathcal{D}} (b) \leq \Diamond_{\mathcal{D}}(\mathcal{A})+1$. As $b$ and $d$ are object names appearing in $\mathcal{A}$, we have $\Box_{\mathcal{D}} (b) = \Box_{\mathcal{D}} (d)=\Diamond_{\mathcal{D}} (b) = \Diamond_{\mathcal{D}} (d)=0$. Therefore, we have $\Box_{\mathcal{D}} (d) -  \Box_{\mathcal{D}} (y) \leq \Box_{\mathcal{D}}(\mathcal{A})+1$, and $ \Diamond_{\mathcal{D}} (y) -  \Diamond_{\mathcal{D}} (d) \leq \Diamond_{\mathcal{D}}(\mathcal{A})+1$. Hence proved.
\end{proof}

This lemma bounds the number of new contestant and concept names that can appear in tableaux expansion. Therefore, it implies that the number of  terms that can appear in tableaux expansion are bounded by $poly(size(\mathcal{A}))$. As tableaux has no branching rules, this implies termination. (See \cite{van2023non} for more details).

\paragraph{Soundness.} The soundness follows immediately from the soundness for LE-$\mathcal{ALC}$ tableaux algorithm \cite[Section 4.2]{van2023non}, and the fact that  the rule $SA(b,d)$ ensures that the model obtained from completion satisfies the axiom $\forall p (b I p \Rightarrow d I p)$.

\paragraph{Completeness.} To prove completeness, we show that the following lemma proved for LE-$\mathcal{ALC}$ \cite[Lemma 5]{van2023non} also holds for its extension with the axiom $\forall p (b I p \Rightarrow d I p)$. 

\begin{lemma}\label{lem:characteristic consistency}
For any ABox $\mathcal{A}$,  any model $M=(\mathbb{F}, \cdot^{\mathcal{M}})$ of $\mathcal{A}   $ can be extended to a model $M'=(\mathbb{F}', \cdot^{\mathcal{M}'})$ such that $\mathbb{F}'=(A',X',I',\{R_\Box'\}_{\Box\in\mathcal{G}}, \{R_\Diamond'\}_{\Diamond\in\mathcal{F}})$, $A \subseteq A'$ and $X \subseteq  X'$, and moreover for every $\Box \in \mathcal{G}$ and $\Diamond \in \mathcal{F}$:

1. There exists $a_C \in A'$ and $x_C \in X'$  such that:
\begin{equation}\label{eq:completness 1}
   C^{\mathcal{M}'} =(I'^{(0)}[x_C^{\mathcal{M}'}], I'^{(1)}[a_C^{\mathcal{M}'}]), \quad a_C^{\mathcal{M}'} \in \val{C^{\mathcal{M}'}}, \quad  x_C^{\mathcal{M}'} \in \descr{C^{\mathcal{M}'}},  
\end{equation}

 2. For every individual $b$ in $A$ there exist $\Diamond b$ and $\Diamondblack b$ in $A'$ such that:
\begin{equation}\label{eq:completness 2}
I'^{(1)}[\Diamondblack b] = R_\Box'^{(1)}[b^{\mathcal{M}'}] \quad \mbox{and} \quad I'^{(1)}[\Diamond b] = R_\Diamond'^{(0)}[b^{\mathcal{M}'}],
\end{equation}

3. For every individual $y$ in $X$ there exist $\Box y$ and $\blacksquare y$ in $X'$ such that:
\begin{equation}\label{eq:completness 3}
I'^{(0)}[\blacksquare y] = R_\Diamond'^{(1)}[y^{\mathcal{M}'}] \quad \mbox{and} \quad I'^{(0)}[\Box y] = R_\Box'^{(0)}[y^{\mathcal{M}'}]. 
\end{equation}

4. For any $C$, $\val{C^\mathcal{M}}= \val{C^{\mathcal{M}'}} \cap A$ and $\descr{C^\mathcal{M}}= \descr{C^{\mathcal{M}'}} \cap X$. 
\end{lemma}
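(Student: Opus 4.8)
The plan is to build $\mathbb{F}'$ by adjoining to the original polarity one fresh classifying object $a_c$ and one fresh classifying feature $x_c$ for every element $c$ of the complex algebra $L:=\mathbb{F}^+$, keeping all original objects and features, and then to define the incidence and the relations so that the complex algebra is preserved up to isomorphism. Concretely, I would set $A':=A\cup\{a_c\mid c\in L\}$ and $X':=X\cup\{x_c\mid c\in L\}$, writing $\mathbf{u}\in L$ for the concept generated by an object $u\in A'$ (so $\mathbf{u}=(u^{\uparrow\downarrow},u^{\uparrow})$ for $u\in A$, while the concept generated by $a_c$ is $c$ itself) and dually $\mathbf{w}\in L$ for a feature $w\in X'$. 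I then define the incidence by $u\,I'\,w \iff \mathbf{u}\le\mathbf{w}$; this extends $I$ on $A\times X$ and makes $a_c$ relate, among the original features, to exactly those in $\descr{c}$, and $x_c$ be related to, among the original objects, by exactly those in $\val{c}$. The adjoint witnesses are then taken to be particular classifying elements: $\Diamond b:=a_{\langle R_\Diamond\rangle\mathbf{b}}$, $\Diamondblack b:=a_{\Diamondblack\mathbf{b}}$, $\Box y:=x_{[R_\Box]\mathbf{y}}$ and $\blacksquare y:=x_{\blacksquare\mathbf{y}}$, where $\Diamondblack$ (resp.\ $\blacksquare$) is the left (resp.\ right) adjoint of $[R_\Box]$ (resp.\ $\langle R_\Diamond\rangle$); these adjoints exist because $L$ is complete and $[R_\Box]$ preserves arbitrary meets while $\langle R_\Diamond\rangle$ preserves arbitrary joins.

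Next I would extend the relations by the recipe that recovers a relation from its operation, setting $u\,R_\Box'\,w \iff \mathbf{u}\le[R_\Box]\mathbf{w}$ and $w\,R_\Diamond'\,u \iff \langle R_\Diamond\rangle\mathbf{u}\le\mathbf{w}$. A short computation using the $I$-compatibility of $R_\Box$ and $R_\Diamond$ shows that these restrict to the original relations on $A\times X$ and $X\times A$, and that every section $R_\Box'^{(0)}[w]$, $R_\Box'^{(1)}[u]$, $R_\Diamond'^{(0)}[u]$, $R_\Diamond'^{(1)}[w]$ is an extent or intent of $\mathbb{P}'$; hence $R_\Box',R_\Diamond'$ are $I'$-compatible and $\mathbb{F}'$ is an enriched formal context. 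Finally I would let $\cdot^{\mathcal{M}'}$ interpret individuals by themselves and agree with $\cdot^{\mathcal{M}}$ on $A$, $X$ and the relations, and set $D^{\mathcal{M}'}:=\eta(D^{\mathcal{M}})$ for each atomic concept $D$, where $\eta$ is the map described below.

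The heart of the argument is the claim that the Galois connection of $\mathbb{P}'$ creates no new concepts: the map $\eta\colon L\to\mathbb{F}'^+$ sending $c$ to the concept of $\mathbb{P}'$ generated by $a_c$ is a lattice isomorphism commuting with the operations, under which the concept generated in $\mathbb{P}'$ by an old object $a$ (resp.\ feature $x$) is $\eta(\mathbf{a})$ (resp.\ $\eta(\mathbf{x})$). Granting this, all four items follow, and $C^{\mathcal{M}'}=\eta(C^{\mathcal{M}})$ for every concept $C$. Item~1 is then immediate from the definition of $I'$, since the extent of $\eta(C^{\mathcal{M}})$ in $\mathbb{P}'$ is $I'^{(0)}[x_C]$ and its intent is $I'^{(1)}[a_C]$, while $a_C,x_C$ lie in the extent and intent of their own concept $C^{\mathcal{M}}$. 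Items~2 and~3 reduce to the definitions and the adjunctions: for instance $I'^{(1)}[\Diamondblack b]$ is the intent in $\mathbb{P}'$ of $\Diamondblack\mathbf{b}$, which equals $R_\Box'^{(1)}[b]$ by $\Diamondblack\dashv[R_\Box]$, whereas $I'^{(1)}[\Diamond b]$ is the intent in $\mathbb{P}'$ of $\langle R_\Diamond\rangle\mathbf{b}$, which is $R_\Diamond'^{(0)}[b]$ directly by the definition of $R_\Diamond'$; the cases of $\Box y$ and $\blacksquare y$ are symmetric. Item~4 is exactly the statement that the extent of $\eta(c)$ in $\mathbb{P}'$ meets $A$ in $\{a\in A\mid \mathbf{a}\le c\}=\val{c}$, so that $\val{C^{\mathcal{M}}}=\val{C^{\mathcal{M}'}}\cap A$ and $\descr{C^{\mathcal{M}}}=\descr{C^{\mathcal{M}'}}\cap X$; conservativity also yields $\mathcal{M}'\models\mathcal{A}$.

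The main obstacle is establishing that $\eta$ is an isomorphism, i.e.\ that the enlarged polarity has exactly the concepts indexed by $L$ and that the old generators are neither collapsed nor split. This amounts to showing that the closure operators of $\mathbb{P}'$ send every subset of $A'$ (resp.\ $X'$) to the extent (resp.\ intent) in $\mathbb{P}'$ of some $c\in L$, which I would prove using that each original extent and intent is already witnessed by a classifying element and that the mixed incidences (old-versus-new and new-versus-new) were defined consistently with $\le$. Once $\eta$ is known to be an order isomorphism commuting with $[R_\Box]$ and $\langle R_\Diamond\rangle$ (immediate from the definitions of $R_\Box'$ and $R_\Diamond'$), the verification of $I'$-compatibility and of Items~1--4 is routine.
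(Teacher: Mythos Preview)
Your construction is correct and is essentially the one the paper invokes from \cite[Section 4.3]{van2023non}: adjoin classifying objects $a_c$ and features $x_c$ and define the extended incidence by $u\,I'\,w\iff\mathbf{u}\le\mathbf{w}$, which is exactly the paper's description ``$b\,I^{\mathcal{M}'}\,x_C$ iff $b\,I^{\mathcal{M}}\,y$ for all $y\in C^{\mathcal{M}}$''. The only cosmetic difference is that you add a classifying pair for \emph{every} element of the complex algebra rather than just for the concepts and adjoint witnesses actually needed, which makes the isomorphism $\eta\colon L\to(\mathbb{F}')^{+}$ particularly transparent and is harmless for the lemma.
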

In \cite[Section 4.3]{van2023non}, this lemma was proved by constructing such model $M'$.  Here, we show that if the model $M$ additionally satisfies axiom $\forall p (b I p \Rightarrow d I p)$, then so does the model $M'$. This follows from the fact that the model $M'$ is constructed in such a way that (see \cite[Section 4.3]{van2023non} for more details) for any $b,d \in A$, a term $b I^{\mathcal{M}'} x_C$ is added for some newly added element $x_C\in X'\setminus X$ iff we have $b I^{\mathcal{M}} y$ for all $y \in C^{\mathcal{M}}$. Then, as $M \models \forall p (b I p \Rightarrow d I p)$, we also get $d I y$ for all $y \in C^{\mathcal{M}}$  which implies $d I^{\mathcal{M}'} x_C$. 

The above lemma ensures that if $\mathcal{A}  \cup \{\forall p (b I p \Rightarrow d I p)$\}  has a model, then it has a model with classification of objects and features. The completeness proof then proceeds by showing that if $\mathcal{A} $ is consistent, then the model for $\mathcal{A} $ with the above properties satisfies all the terms in the completion of $\mathcal{A} $. We refer to \cite[Section 4.3]{van2023non} for details.


Hence, we can use this expanded tableaux algorithm to answer separation queries over a given ABox $\mathcal{A}$ in polynomial time. This also allows us to answer {\em differentiation queries} $Dif(b,d) = S(b,d) \wedge S(d,b)$ (resp.~$Dif(y,z) = S(y,z) \wedge S(z,y)$) which ask if $\mathcal{A}$ implies that $b$ and $d$ (resp.~$y$ and $z$) can be differentiated from each other by the relation $I$ in polynomial time. We can similarly define and answer the separation and differentiation queries for the relations $R_\Box$ and $R_\Diamond$. Furthermore, we can answer {\em identity queries} asking if $\mathcal{A}$ implies that two individuals are not identical by checking if they can be differentiated by some relation.

The strategy used to answer separation queries can also be used to answer many other interesting types of queries. For example, we can consider separation queries which ask about separation between different relations. Consider the queries $SA(R_\Box, R_\Diamond, b) = \exists p (b R_\Box  p \wedge \neg (p R_\Diamond b))$, and $SX(R_\Diamond, R_\Box, y) = \exists p (y R_\Diamond  p \wedge \neg (p R_\Box y))$. These queries ask if the given ABox implies that the relations $R_\Box$ and $R_\Diamond$ are the local inverses of each other in some object $b$ or feature $y$ appearing in the given ABox. 

Note that for any LE-$\mathcal{ALC}$ ABox $\mathcal{A}$,

1.~$\mathcal{A} \not \models  \exists p (b R_\Box  p \wedge \neg (p R_\Diamond b))$  iff $\mathcal{A}  \cup \{\forall p(b R_\Box  p  \Rightarrow  p R_\Diamond b)\}$ is consistent, and 

2.~$\mathcal{A} \not \models \exists p (y R_\Diamond  p \wedge \neg (p R_\Box y))$  iff $\mathcal{A}  \cup \{\forall p (y R_\Diamond  p \Rightarrow  p R_\Box y)\}$ is consistent. 

Therefore, a separation query $SA(R_\Box, R_\Diamond, b) $ (resp.~$SX(R_\Diamond, R_\Box, y)$) can be answered by checking if $\mathcal{A}$ is consistent in the extension of LE-$\mathcal{ALC}$  with the axiom $\forall p(b R_\Box  p  \Rightarrow  p R_\Diamond b)$ (resp.~$\forall p (y R_\Diamond  p \Rightarrow  p R_\Box y)$). To this end, we consider the expansion of the LE-$\mathcal{ALC}$ tableaux algorithm with the following rules 

\smallskip
{{\centering
  \begin{tabular}{cccc}
\AXC{$ b R_\Box y$}
\LL{\fns $SA(R_\Box, R_\Diamond, b) $}
\UIC{$y R_\Diamond b$}
\DP 
\ & \ 
\AXC{$y R_\Diamond b$}
\RL{\fns $SX(R_\Diamond, R_\Box, y)$}
\UIC{$b R_\Box y$}
\DP. 
\end{tabular}
\par
}}

\begin{theorem}\label{thm:seperation 2}
The tableaux algorithm obtained by adding rule $SA(R_\Box, R_\Diamond, b)$ (resp.~$SX(R_\Box, R_\Diamond, y)$) to LE-$\mathcal{ALC}$ tableaux expansion rules provides a polynomial-time  sound and complete decision procedure for checking consistency of $\mathcal{A}  \cup  \{\forall p(b R_\Box  p  \Rightarrow  p R_\Diamond b)\} $(resp.~$\mathcal{A}  \cup \{\forall p (y R_\Diamond  p \Rightarrow  p R_\Box y)\}$). 
\end{theorem}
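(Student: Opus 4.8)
The plan is to mirror the three-step argument already carried out for Theorem~\ref{thm:seperation}: for the LE-$\mathcal{ALC}$ tableaux algorithm extended with the rule $SA(R_\Box,R_\Diamond,b)$ I would re-establish (i) termination in polynomial time, (ii) soundness, and (iii) completeness; the rule $SX(R_\Diamond,R_\Box,y)$ is handled symmetrically. Throughout, it is essential that $b$ is a fixed individual name occurring in $\mathcal{A}$, so that $\Box_{\mathcal{D}}(b)=\Diamond_{\mathcal{D}}(b)=0$ and, crucially, $SA(R_\Box,R_\Diamond,b)$ never fires on any of the auxiliary individuals $\Diamondblack b$, $\Diamond b$, $\Box y$, $\blacksquare y$ introduced during expansion.

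For termination I would first trace the downstream closure of the new rule: from a term $b R_\Box y$ it adds $y R_\Diamond b$, to which the adjunction rule $R_\Diamond$ adds $\Diamond b I y$ and $b I \blacksquare y$, while the $I$-compatibility rules $\Diamond b$ and $\blacksquare y$ only re-derive $y R_\Diamond b$; no rule produces a fresh $R_\Box$-assertion with first coordinate $b$ out of these. Hence the only genuinely new individuals the extension can introduce (beyond those of $\overline{\mathcal{A}}$) are the single object $\Diamond b$ and, for each feature $y$ already present with $b R_\Box y$ derived, the feature $\blacksquare y$; once these are added, the remaining expansion is ordinary LE-$\mathcal{ALC}$ expansion over a polynomially larger individual set, so Lemma~\ref{lem:depth} — applied with its additive bounds relaxed by a fixed constant, which leaves the $O(|\mathcal{A}|)$ bound on the number of distinct modal depths, hence on the number of terms, intact — yields a polynomial bound on the size of the extended completion. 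As the calculus remains non-branching, this gives termination.

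Soundness is immediate, exactly as in Theorem~\ref{thm:seperation}: it follows from the soundness of the LE-$\mathcal{ALC}$ rules \cite[Section 4.2]{van2023non} together with the observation that $SA(R_\Box,R_\Diamond,b)$ forces $(y,b)\in R_\Diamond^{\mathcal{M}}$ whenever $(b,y)\in R_\Box^{\mathcal{M}}$, so the model read off from a clash-free completion as in Theorem~\ref{th:model} validates $\forall p\,(b R_\Box p \Rightarrow p R_\Diamond b)$. For completeness I would reuse Lemma~\ref{lem:characteristic consistency} and, as in the $SA(b,d)$ case, check that the passage from $M$ to $M'$ of \cite[Section 4.3]{van2023non} preserves this axiom: since the $R_\Box'$- and $R_\Diamond'$-edges incident to the old object $b$ are added only in the minimal, $I$-compatible way dictated by the concept-membership data attached to $b$, any $R_\Box'$-successor of $b$ in $M'$ is forced to be an $R_\Diamond'$-predecessor of $b$ once $M\models\forall p\,(b R_\Box p\Rightarrow p R_\Diamond b)$. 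Combining a clash-free completion with such an $M'$ witnesses consistency of $\mathcal{A}\cup\{\forall p\,(b R_\Box p\Rightarrow p R_\Diamond b)\}$, and a clash witnesses inconsistency, so the extended algorithm is a decision procedure.

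The step I expect to be the main obstacle is the termination bookkeeping: unlike $SA(b,d)$, which merely copies incidence facts, $SA(R_\Box,R_\Diamond,b)$ injects a relational term into the adjunction machinery, and one must verify carefully that this feedback closes after polynomially many steps — which is exactly where the restriction of the rule to the fixed depth-$0$ object $b$ (so that the new relational facts cannot re-trigger the rule and cannot spawn an unbounded chain of modal prefixes) does the work.
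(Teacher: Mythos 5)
Your overall architecture (termination, then soundness and completeness by the same route as Theorem~\ref{thm:seperation}) is the paper's, and your soundness/completeness sketch matches what the paper does. The gap is in the termination argument, at exactly the point you yourself flag as the main obstacle. You trace the ``downstream closure'' of the new rule only one or two steps: $b R_\Box y$ gives $y R_\Diamond b$, adjunction gives $\Diamond b\, I\, y$ and $b\, I\, \blacksquare y$, and you claim the $I$-compatibility rules merely re-derive $y R_\Diamond b$, so that the only new individuals are $\Diamond b$ and the features $\blacksquare y$. But the term $\Diamond b\, I\, y$ does not stop there: if $y$ is a classifying feature $x_C$, the appending rule adds $\Diamond b : C$, the connective rules decompose $C$, and if some subconcept of $C$ has the form $[R_\Box]C'$ the rule $\Box$ can then add a term $\Diamond b\, R_\Box\, z$; adjunction applied to that term introduces the individual $\Diamondblack\Diamond b$, and nothing in your argument prevents this from iterating into an unbounded chain of modal prefixes. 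Note also that the real danger is not, as you frame it, a fresh $R_\Box$-assertion with first coordinate $b$ re-triggering $SA(R_\Box,R_\Diamond,b)$, but fresh $R_\Box$-assertions with first coordinate $\Diamond b$ feeding back into the adjunction machinery.

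Ruling this out is the actual content of the paper's termination proof and requires a global invariant, not a local trace of rule applications. The paper introduces the classes of $\Diamond$-leading and $\Box$-leading concepts and proves, by a simultaneous induction over all expansion rules (Lemma~\ref{lem:Box-Diamond separation}), that any concept $C$ with $\Diamond b : C$ or $\Diamond b\, I\, x_C$ in $\overline{\mathcal{A}}$ must be $\Diamond$-leading, that no term of the form $\Diamond b\, R_\Box\, y$ and no individual of the form $\Diamondblack\Diamond b$ or $\blacksquare\Box y$ can ever appear, and hence (Corollary~\ref{cor:box-diamond separation}) that adding a term $y R_\Diamond b$ to an ABox creates no new $R_\Box$-terms at all. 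Only with that corollary in hand does the modified version of Lemma~\ref{lem:depth} (with its additive bounds loosened by one) go through and deliver the polynomial bound. Your proposal correctly guesses the shape of the conclusion --- that the feedback closes and the depth bounds survive a constant relaxation --- but supplies no argument for the invariant that makes this true; as written, the termination claim is unsupported.
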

\begin{proof}
See Appendix \ref{app:B}. 
\end{proof}

We can similarly answer the queries of the forms  $SA(R_\Box, I, b) = \exists p (b R_\Box  p \wedge \neg (b I p))$, and $SX( R_\Diamond, I, y) = \exists p (y R_\Diamond  p \wedge \neg (p I y))$ using tableaux algorithm expanded with the rules

{{\centering
\begin{tabular}{cccc}
\AXC{$ b R_\Box y$}
\LL{\fns }
\UIC{$b I y$}
\DP 
\ &and& \ 
\AXC{$y R_\Diamond b$}
\RL{\fns }
\UIC{$b I y$}
\DP.
\end{tabular}
\par}}

However, this does not apply to all separation queries on relations. For example, consider queries of the form $SA(I, R_\Box, b) = \exists p (b I  p \wedge \neg (b R_\Box p))$, and $SX(I, R_\Diamond, y) = \exists p (p I y \wedge \neg (y R_\Diamond  p))$. This is because the expansion of LE-$\mathcal{ALC}$ tableaux algorithm with the rules 

{{\centering
\begin{tabular}{cccc}
\AXC{$b I y$}
\LL{\fns }
\UIC{$ b R_\Box y$}
\DP 
\ &and& \
\AXC{$b I y$}
\RL{\fns }
\UIC{$y R_\Diamond b$}
\DP 
\end{tabular}
\par}}

\noindent may not be terminating. We leave this as part of future work.

\section{Examples}\label{sec:examples}
In this section, we give a toy example of an LE-$\mathcal{ALC}$ knowledge base and some  queries of different types to demonstrate  working of various algorithms for  answering these queries discussed in the paper.

Suppose, we want to create a knowledge base to represent categorization of some movies on a streaming website  which can be used to answer some queries based on them.  We list the concept names and individual names for objects and features appearing in the knowledge base in the following tables:

{{
\centering\footnotesize
 \begin{tabular}{|c|c|c|c|c|c|}
    \hline
        \textbf{concept name} & \textbf{symbol} & 
        \textbf{concept name} & \textbf{symbol} & 
        \textbf{concept name} & \textbf{symbol} \\
        \hline
          Italian movies & $IM$ & German movies & $GM$ & French movies & $FM$\\  European movies & $EUM$ & Recent movies & $RM$  & Recent drama movies & $RDM$ \\  Drama movies & $DM$  & Famous drama movies & $FDM$ &&\\
        \hline
    \end{tabular}
    \par 
}}
\smallskip

{{
\centering\footnotesize
 \begin{tabular}{|c|c|c|c|}
     \hline
         \textbf{object} & \textbf{symbol} & 
         \textbf{object} & \textbf{symbol}\\ 
         \hline
    All the President's Men & $m_1$ & Spirited Away & $m_2$ \\ Oppenheimer & $m_3$
    & Cinema Paradiso  & $m_4$ \\
    \hline
 \end{tabular}

 \smallskip
 
 \begin{tabular}{|c|c|c|c|c|c|}
    \hline
        \textbf{feature} & \textbf{symbol} & \textbf{feature} & \textbf{symbol}  & \textbf{feature} & \textbf{symbol}  \\
        \hline
    German language  & $f_1$ &  French language  & $f_2$ & Based on real story & $f_3$  \\  Serious plot & $f_4$ &  Released after 2015 & $f_5$ & Released after 2020 & $f_6$ \\
          \hline 
           \end{tabular}
           \par
}}
\smallskip

Suppose  the following  knowledge base $\mathcal{K}_1$ presents the information obtained by  the website  regarding the movies, their features and their categorization into above categories from some initial source which possibly has incomplete information. 

\begin{multicols}{2}
\noindent
\begin{align*}
    \mathcal{A}_1=\{& m_4:IM, \neg(m_4 I x_\emptyset), x_{\emptyset} :: FM\wedge IM,\\ & x_{\emptyset}::GM\wedge IM, f_1::GM, f_2::FM,  \\ & f_4::DM, m_3:RDM, m_3 I f_6, m_1 I f_3, \\ & \neg(m_1 I f_2), \neg(m_2:EUM)\},
\end{align*}

\columnbreak
\noindent
\begin{align*}
    \mathcal{T}_1=\{EUM&\equiv GM\vee FM,\\
    RDM&\equiv RM\wedge DM, \\ IM&\sqsubseteq EUM\}.
\end{align*}
\end{multicols}

Since TBox $\mathcal{T}_1$ is acyclic, we can convert the knowledge base $\mathcal{K}_1$ into an equivalent knowledge base with only ABox using unraveling, which would be suitable for our algorithms.

For any movie $m$ and feature $y$, we have $m I y$ (resp.~$\neg(m I y)$) iff according to the initial source database, movie $m$ has (resp.~does not have) feature $y$. The feature $x_{\emptyset}$ intuitively represents a contradiction. The terms $x_{\emptyset}:: FM\wedge IM$ and $x_{\emptyset}:: GM\wedge IM$  states that there is no movie that is both a French movie and Italian movie or both a German and Italian movie. The term $m_4:IM$ specifies that Cinema Paradiso is an Italian movie. The term $f_3::AM$ states that Action movies have action sequences. Other terms in $\mathcal{A}_1$ can be explained similarly. The term $EUM\equiv GM\vee FM$ states that the category of European movies is the smallest category on the website which contains both German movies and French movies. The term $IM\sqsubseteq EUM$ can be equivalently written as $IM\equiv EUM\wedge C$ for some new category $C$, meaning  all Italian movies are European movies. Other terms in $\mathcal{T}_1$ can be explained similarly. Note that the terms $\neg(m_4 I x_{\emptyset})$, $x_{\emptyset} :: FM\wedge IM$, and  $x_{\emptyset}::GM\wedge IM$ together imply that $m_4$ is not in $(FM\wedge IM)\vee(GM\wedge IM)$. However, $m_4$ is in $IM = EUM\wedge IM = (GM\vee FM) \wedge IM$. Therefore, this knowledge base is inconsistent in distributive logic but it is consistent in our setting of LE-$\mathcal{ALC}$.

 Additionally, the website also tries to get an understanding of  subjective (epistemic) view  of different user groups  on the website regarding movies, their features, and categorization.  To this end, website asks some  users from different groups the following two questions:
 
 \noindent (a) Given a list of movies:
 
 (a1) Please choose movies which have feature $y$ from the list; 
 
 (a2) please choose movies which do not have feature $y$ from the list.
 
 \noindent (b) Given a list of features:
 
  (b1) please choose features that describe movie $m$ from the list;
 
 (b2) please choose features which do not describe movie $m$ from the list. 
 
 Note that there can be movies (resp.~features) in the list of options which are not chosen as answer to either (a1) or (a2) (resp.~(b1) or (b2)). 
 
 We model information  obtained from above questions as follows: If some user from group $i$ chooses movie $m$ (resp.~movie $m$, resp.~feature $y$, resp.~feature $y$) as an answer  to question (a1) (resp.~(a2), resp.~(b1), resp.~(b2)), then we add  $m {R_\Box}_i y$ (resp.~$\neg(m {R_\Box}_i y)$, resp.~$y {R_\Diamond}_i m$, resp.~$\neg(y {R_\Diamond}_i m)$) to the knowledge base. Note that, in general,  none of the terms $y {R_\Diamond}_i m$, and  $m {R_\Box}_i y$ implies  other. This is because question (a) and question (b) may be asked to different users from group $i$.
 
 Then, for any category $C$, $[{R_\Box}_i] C$ denotes the category defined by objects which are reported to have  all the features in $\descr{C}$ (description of $C$) by some user in group $i$. Thus, $[{R_\Box}_i] C$ can be seen as the category of movies which are considered to be in $C$ according to the user group $i$. This means that for any movie $m$ in $[{R_\Box}_i] C$ and any feature $y$ of $C$, some user in the group $i$ will name $m$ as a movie having feature $y$ as an answer to (a1).
 
 Similarly, $\langle {R_\Diamond}_i\rangle C$ denotes the  category defined by  features  which all objects in $\val{C}$ (objects in $C$) are reported to have by some user in group $i$. Thus, $\langle {R_\Diamond}_i\rangle C$ can be seen as the category of movies defined by features which are considered to be in the description of $C$ according to the user group $i$. This means that  for any feature $y$ of $\langle {R_\Diamond}_i\rangle C$ and any movie $m$ in $C$, some user in the group $i$ will name $y$ as a feature of the movie $m$ as an answer to (b1).

 Moreover, the website can also  ask users the following questions:
 
 (c) Please choose movies which belong to the category $C$ from a given list of movies. 
 
 (d) Please choose features that describe the category $C$ from a given list of features.
 
 If $m$ (resp.~$y$) is chosen as an answer of (c) (resp.~(d)) by some user in group $i$, then we add term $m:[{R_\Box}_i] C$ (resp.~$y::\langle {R_\Diamond}_i\rangle C$) to the knowledge base. Here, we assume that for any feature $y$ in description of $C$, if some user chooses $m$ as a movie in category $C$, then there is some user from the same group who will also choose $m$ as a movie with feature $y$ in answer to (a1). This assumption ensures that the $[{R_\Box}_i] C$ is interpreted  in  accordance with LE-$\mathcal{ALC}$ semantics from relation ${R_\Box}_i$.
 
 Similarly,  for any movie $m$ in $C$,  we assume that if some user chooses $y$ as a feature in category $C$, then there is some user from the same group will also choose $y$ as a feature with movie $m$ in answer to (b1). 
  This assumption ensures that the  $\langle {R_\Diamond}_i\rangle C$  is interpreted  in  accordance with LE-$\mathcal{ALC}$ semantics from relation ${R_\Diamond}_i$ \footnote{These assumptions can be justified if we assume we have a large number of users in each group so that at least some users in the group will have the information regarding all the movies and their features under consideration.}.

The following table presents knowledge  base $\mathcal{K}_2$  representing different user groups' views regarding the movies obtained from the answers to the above questions. For simplictiy, we assume that we have only two different user groups. From here on, we will use $\Box_i$ (resp.~$\Diamond_i$) to denote $[{R_\Box}_i]$ (resp.~$\langle {R_\Diamond}_i \rangle$) for $i=1,2$.
\begin{multicols}{2}
\noindent
\begin{align*}
    \mathcal{A}_2=\{& m_3 {R_\Box}_1 f_3, m_3 {R_\Box}_2 f_3,\neg(m_1 {R_\Box}_1 f_6),\\ & f_3 {R_\Diamond}_1 m_3, f_3 {R_\Diamond}_2 m_3, m_3: \Box_2 RDM, \\ & f_5::\Diamond_1 RM, \neg(m_1 {R_\Box}_2 f_5)\},
\end{align*}
\columnbreak
\noindent
\begin{align*}
    \mathcal{T}_2=\{FDM &\equiv \Box_1 DM \wedge \Box_2 DM\}.
\end{align*}
\end{multicols}
Let $\mathcal{K} =\mathcal{K}_1 \cup \mathcal{K}_2 $
be the knowledge base obtained by combining knowledge from the source database and from users.
Given the knowledge base $\mathcal{K}$, we can answer the following queries.

\fakeparagraph{Positive queries.}
By Lemma \ref{lem:universal model properties}, these queries can be answered using the universal model constructed using the Tableaux Algorithm~\ref{alg:main algo} from the ABox obtained by unraveling $\mathcal{K}$. We depict this model in Appendix \ref{App:Models for example knowledge base}.  

(1) $q(p)=m_3 I p$ asking to name all the features implied by $\mathcal{K}$ that the movie Oppenheimer has. Using the universal model, we can give the answer $q(a)=\{f_4,f_6\}$.

(2) $q=m_4:FDM$ asking if $\mathcal{K}$ implies that Cinema Paradiso is a Famous drama movie. We can give  answer `No' since in the universal model, $I(m_4, x_{FDM})=0$. 

(3) $q=\Box_2 RDM \sqsubseteq  \Box_2 DM$ asking if $\mathcal{K}$ implies that all the movies considered to be recent drama movies by users in group 2 are also considered to be drama movies by them. We can give  answer `Yes' since in the universal model, $a_{\Box_2 RDM} I x_{\Box_2 DM}$.

(4) $q=m_3:\Box_2\Diamond_1 RM$ asking if $\mathcal{K}$ implies whether for any feature which is considered to be in the description of Recent movies according to some user in group 1, there is some user in group 2 who considers (reports) Oppenheimer to have this feature. We can give answer `No' since in the universal model of the knowledge base  $ \mathcal{K}'= \mathcal{K} \cup \{a_{\Box_2\Diamond_1 RM}: \Box_2\Diamond_1 RM, x_{\Box_2\Diamond_1 RM}:: \Box_2\Diamond_1 RM$, $I(m_3, x_{\Box_2\Diamond_1 RM})=0$.  In Appendix \ref{App:Models for example knowledge base}, we provide the universal model for $\mathcal{K}$  which is obtained from  complete tableaux expansion  of $\mathcal{K}$. It is easy to check from the shape of the tableaux expansion rules  that no tableaux expansion rule can add term $m_3 I x_{\Box_2\Diamond_1 RM}$ during the expansion of knowledge base $ \mathcal{K}'$. Hence, 
$I(m_3, x_{\Box_2\Diamond_1 RM})=0$ holds in the universal model of $\mathcal{K}'$. 

\fakeparagraph{Negative queries.}
(1) $q=\neg(m_1:\Box_2\Diamond_1 RM)$ asking if $\mathcal{K}$ implies that the movie All the President's Men is not a movie in category $\Box_2\Diamond_1 RM$ (interpretation of this category is mentioned in previous example). We can give the answer `Yes' since, if we add the term $m_1:\Box_2\Diamond_1 RM$ to $\mathcal{K}$, this term along with the term $f_5::\Diamond_1 RM$ appearing in $\mathcal{K}$, we would get $m_1 {R_\Box}_2 f_5$ by rule $\Box$. Thus, the resulting knowledge base is not consistent, which means that $\mathcal{K}$ implies $\neg(m_1:\Box_2\Diamond_1 RM)$. 

(2) $q=\neg(m_3 {R_\Box}_1 f_4)$ asking if $\mathcal{K}$ implies that some user in group 1 considers Oppenheimer to be a movie which does not have a serious plot. We can give the answer `No' since the ABox obtained by  unraveling $\mathcal{K}$ does not contain the term $\neg(m_3 {R_\Box}_1 f_4)$ and by Lemma~\ref{lem:negative relation} it is not implied by $\mathcal{K}$. 

\fakeparagraph{Separation queries.}
(1) $q= Dif(m_2, m_4)$ asking if $\mathcal{K}$ implies that there is a feature that one of the movies Spirited Away and Cinema Paradiso has but the other does not. We can give the answer `Yes'. If we add the rules $SA(m_2,m_4)$ and $SA(m_4,m_2)$ to the LE-$\mathcal{ALC}$ tableaux expansion rules and run the resulting tableaux algorithm on ABox obtained by unraveling $\mathcal{K}$, we will get the clash as showed below. 

\smallskip
{{
\centering
\begin{tabular}{ccc}
\hline
   \textbf{rules}  & \textbf{premises}  & \textbf{added terms } \\
   \hline
    create & & $x_{GM\vee FM}:: GM\vee FM$\\
    $\wedge_A$ & $m_4:(GM\vee FM)\wedge C$ & $m_4:GM\vee FM$, $m_4:C$ \\
    $I$ & $m_4:GM\vee FM$,  $x_{GM\vee FM}:: GM\vee FM$ &  $m_4 I x_{GM\vee FM}$\\
    $SA(m_4,m_2)$  & $m_4 I x_{GM\vee FM}$ & $m_2 I x_{GM\vee FM}$ \\
    $\neg x$ & $\neg(m_2:GM\vee FM)$ & $\neg(m_2 I x_{GM\vee FM})$ \\
    \hline
\end{tabular}
\par
}}
\smallskip

(2) $q=SA({R_\Box}_1, I, m_4)$ asking if $\mathcal{K}$ implies that there is a feature that some user in group 1 considers Cinema Paradiso has but according to the initial source database it does not. We can give the answer `No', since if we add the rule $SA({R_\Box}_1, I, m_4)$ to the LE-$\mathcal{ALC}$ tableaux expansion rules and run the resulting tableaux algorithm on ABox obtained by unraveling $\mathcal{K}$, we will get no clash, i.e.~$\mathcal{K}$ is consistent in the extension of LE-$\mathcal{ALC}$ with the axioms $\forall y (m_4 {R_\Box}_1 y \Rightarrow m_4 I y)$.

\section{Conclusion and future work}
\label{Sec: conclusions}
In this paper, we have shown that the tableaux algorithm for LE-$\mathcal{ALC}$, or its extension with appropriate rules, can be used to answer several types of queries over LE-$\mathcal{ALC}$ ABoxes in polynomial time. Additionally, can generalize these algorithms to exponential time algorithms for  LE-$\mathcal{ALC}$ knowledge bases with acyclic TBoxes by unraveling. 

\fakeparagraph{Dealing with cyclic TBoxes and  RBox axioms.} 
In this paper, we introduced a tableaux algorithm only for knowledge bases with acyclic TBoxes. In the future, we intend to generalize the algorithm to deal with cyclic TBoxes as well. Another interesting avenue of research is to develop tableaux and query answering algorithms for extensions of LE-$\mathcal{ALC}$ with RBox axioms. RBox axioms are used in description logics to describe the relationship between different relations in  knowledge bases and  the properties of these relations such as reflexivity, symmetry, and transitivity. It would be interesting to see if it is possible to obtain necessary and/or sufficient conditions on the shape of RBox axioms for which a tableaux algorithm can be obtained. This has an interesting relationship with the problem in LE-logic of providing computationally efficient proof systems for various extensions of LE-logic in a modular manner \cite{greco2016unified,ICLArough}. 

\fakeparagraph{Universal models for other types of queries}
In this work, we showed that the model constructed from tableaux Algorithm \ref{alg:main algo} acts as universal model for several types of positive queries. In the future, it would be interesting to study if we can develop tableaux algorithms in such way that the resulting models can act as universal models for negative queries and other types of queries. This would allow the algorithm to answer multiple such queries efficiently. 

\fakeparagraph{Answering more types of queries}
In Section \ref{ssuc:separation queries}, we mentioned that  certain separation queries cannot be answered by our method due to potential non-termination of tableaux arising from the naive extension LE-$\mathcal{ALC}$ tableaux expansion rules corresponding to these queries. However, it may be possible to achieve termination in some of these case by incorporating appropriate loop check conditions into these expansion rules. In the future, we intend to study such extensions. 

\fakeparagraph{Generalizing to more expressive description logics.} The LE-$\mathcal{ALC}$ is the non-distributive counterpart of $\mathcal{ALC}$. A natural direction for further research is to explore the non-distributive counterparts of extensions of $\mathcal{ALC}$ such as $\mathcal{ALCI}$ and $\mathcal{ALCIN}$ and fuzzy generalizations of such description logics. This would allow us to express more constructions like concepts generated by an object or a feature, which can not be expressed in LE-$\mathcal{ALC}$. This would provide us language to answer many more types of interesting queries regarding enriched formal contexts. 

\bibliographystyle{eptcs}
\bibliography{generic}

\appendix


\section{Proof of Theorem \ref{thm:seperation 2}}\label{app:B}
We only give the proof for $SA(R_\Box, R_\Diamond, b)$. The proof for $SX(R_\Diamond, R_\Box, y)$ is similar. The proof for soundness and completeness is analogous to the proof for soundness and completeness of Theorem \ref{thm:seperation} given in Section \ref{ssuc:separation queries}. 
To prove termination, we would need the following lemma.

\begin{definition}
We define $\Diamond$-leading concepts as the smallest set of concepts  satisfying the following conditions. 
\begin{enumerate}
    \item For any atomic concept $C$, the concept $\Diamond C$ is $\Diamond$-leading.
    \item If $C$ is  $\Diamond$-leading, then  $\Diamond C$ is  $\Diamond$-leading.
    \item If $C$ is   $\Diamond$-leading, then $C\vee C_1$ is    $\Diamond$-leading for any $C_1$. 
      \item If $C_1$ and $C_2$ are  $\Diamond$-leading, then $C_1 \wedge  C_2$ is  $\Diamond$-leading. 
\end{enumerate} 
We define $\Box$-leading concepts as the smallest set of concepts  satisfying the following conditions.
\begin{enumerate}
    \item For any atomic concept $C$, the concept $\Box C$ is $\Box$-leading.
    \item If $C$ is  $\Box$-leading, then  $\Box C$ is  $\Box$-leading.
    \item If $C$ is   $\Box$-leading, then $C\wedge  C_1$ is    $\Box$-leading for any $C_1$. 
    \item If $C_1$ and $C_2$ are  $\Box$-leading, then $C_1 \vee  C_2$ is  $\Box$-leading. 
\end{enumerate} 
\end{definition}
Note that a concept $C_1 \wedge C_2$ (resp.~$C_1 \vee C_2$) is $\Diamond$-leading (resp.~$\Box$-leading) iff both $C_1$ and $C_2$ are $\Diamond$-leading (resp.~$\Box$-leading). 

\begin{lemma}\label{lem:Box-Diamond separation}
For any LE-$\mathcal{ALC}$ ABox $\mathcal{A}$ and any individual names $b$, $y$, the following holds:
\begin{enumerate}
\item If a term of the form  $\Diamond b I x_C$ or  $\Diamond b :C$ or $x_C R_\Diamond b$ or  $ b I \blacksquare x_C$ appears in $\overline{\mathcal{A}}$, then $C$ must be $\Diamond$-leading. 

\item If a term of the form  $ a_C I \Box y$ or  $\Box y::C$ or $a_C R_\Box  y$ or $\Diamondblack a_C I y$ appears in $\overline{\mathcal{A}}$, then $C$ must be $\Box$-leading. 

\item If we have a term of the form $x_C::C'$ or $a_{C'} I x_C$ or $a_{C'}:C$ in $\overline{\mathcal{A}}$,  and $C$ is $\Box$-leading, then  $C'$ is also $\Box$-leading. 

\item If we have a term of the form $a_C:C'$ or $a_{C} I x_{C'}$ or $x_{C'}::C$ in $\overline{\mathcal{A}}$,  and $C$ is $\Diamond$-leading, then  $C'$ is also $\Diamond$-leading. 
    \item No term of the form $\Diamond b I \Box y$ 
can belong to $\overline{\mathcal{A}}$. 
    \item No term of the form $\Diamond b R_\Box y$    
    can belong to $\overline{\mathcal{A}}$. 
    \item No constant of the form $\Diamondblack\Diamond b$ or $\blacksquare \Box y$ appears in $\overline{\mathcal{A}}$ for any $b$ or $y$. 
\end{enumerate}

\end{lemma}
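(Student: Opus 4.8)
The plan is to prove all seven items simultaneously by induction on the number of expansion rules applied in the construction of $\overline{\mathcal{A}}$, where now the expansion rules are the LE-$\mathcal{ALC}$ rules together with $SA(R_\Box, R_\Diamond, b)$. The base case is when the term in question already appears in the original ABox $\mathcal{A}$: since $\mathcal{A}$ contains no constants of the form $\Diamond b$, $\Diamondblack b$, $\Box y$, $\blacksquare y$ and no classifying constants $a_C$, $x_C$, items 1--7 are vacuously true initially (no term of the relevant shape can occur in $\mathcal{A}$). For the inductive step I would go through the rules one at a time and check that each preserves the conjunction of items 1--7. The key observation driving items 1--4 is that the classifying constants $a_C$, $x_C$ are introduced only by the creation rule, which simultaneously adds $a_C:C$ and $x_C::C$; all subsequent terms about $a_C$ or $x_C$ are generated from these by rules that either keep $C$ fixed or replace it by an immediate subformula ($\wedge_A$, $\vee_X$) or by a modal predecessor ($\Box$, $\Diamond$, the adjunction and $I$-compatibility rules), and the class of $\Box$-leading (resp.~$\Diamond$-leading) concepts is, by the remark just before the lemma, closed under exactly these operations in the direction needed. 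The exceptional identifications $\Diamond a_C = a_{\Diamond C}$ and $\Box x_C = x_{\Box C}$ are precisely what make clauses 1 and 2 of the definition of $\Diamond$-leading / $\Box$-leading concepts correct, so these must be invoked carefully when the adjunction rules fire on a classifying constant.

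For items 5, 6, 7 the strategy is to assume for contradiction that such a term (or constant) first appears at some expansion step, and to trace back which rule could have produced it. A term $\Diamond b I \Box y$ could only come from the basic rule $I$ applied to $\Diamond b : C$ and $\Box y :: C$ for a common $C$; but by item 1 (already available by the induction hypothesis) $C$ would have to be $\Diamond$-leading, and by item 2 it would have to be $\Box$-leading, and these two classes are disjoint (an easy separate sublemma: every $\Diamond$-leading concept has an outermost $\Diamond$ somewhere on every ``spine'' while every $\Box$-leading concept has an outermost $\Box$ — more precisely one shows by structural induction that no concept is both). This contradiction gives item 5; it could alternatively be produced by $SA(R_\Box,R_\Diamond,b')$? — no, that rule only produces $R_\Diamond$-terms, not $I$-terms, so it is irrelevant here. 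Item 6, $\Diamond b R_\Box y$, could arise from the $I$-compatibility rule $\Box y$ applied to $\Diamond b I \Box y$, which is impossible by item 5; from rule $\Box$ applied to $\Diamond b : [R_\Box]C$ and $y::C$, but $[R_\Box]C$ is $\Box$-leading, contradicting item 1 applied to $\Diamond b$; from rule $\Diamondblack b'$ (i.e.~$\Diamondblack b' I y \Rightarrow b' R_\Box y$), which would require $b' = \Diamond b$, i.e.~a constant $\Diamondblack \Diamond b$, excluded by item 7; and from the new rule $SX(R_\Diamond,R_\Box,y)$ — not present here — so again irrelevant. Item 7 is handled the same way: a constant $\Diamondblack \Diamond b$ is introduced only by the adjunction rule $R_\Box$ from a term $\Diamond b R_\Box y$, which cannot occur by item 6; symmetrically for $\blacksquare \Box y$.

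The main obstacle I anticipate is bookkeeping rather than conceptual: one has to verify that the new rule $SA(R_\Box,R_\Diamond,b)$ does not break any of items 1--7. But that rule only ever adds a term of the form $y R_\Diamond b$ from a term $b R_\Box y$ where $b$ is one of the two fixed object names appearing in $\mathcal{A}$ — in particular $b$ is \emph{not} of the form $\Diamond b'$ (by item 7 no such constant equals a name from $\mathcal{A}$, and the side condition of the rule restricts $b$ to names in $\mathcal{A}$), so the newly added $y R_\Diamond b$ cannot be of the shape $x_C R_\Diamond b$ with $b$ forbidden, and it trivially cannot create any of the excluded patterns in items 5--7 since those involve the constants $\Diamond(\cdot)$, $\Box(\cdot)$ and $R_\Box$, not $R_\Diamond$ on an ordinary name. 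Hence the only real work is the careful case analysis over the original LE-$\mathcal{ALC}$ rules, which is a finite check of the kind already carried out in the proof of Lemma \ref{lem:universal model properties}; I would present it rule-by-rule, grouping the logical-connective rules, the $I$-compatibility rules, the $\Box$/$\Diamond$ rules, the adjunction rules and the appending rules, and in each case citing the relevant closure clause of the definition of $\Box$/$\Diamond$-leading concept together with the appropriate earlier item of the induction hypothesis.
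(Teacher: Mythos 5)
Your overall strategy --- a simultaneous induction on the number of expansion steps, with a vacuous base case and a rule-by-rule verification using the closure properties of $\Box$-leading and $\Diamond$-leading concepts --- is exactly the paper's proof, and your observation that the disjointness of the two classes needs its own small structural induction is a point the paper glosses over. However, there is one genuine error in your plan: you run the induction over the LE-$\mathcal{ALC}$ rules \emph{together with} $SA(R_\Box,R_\Diamond,b)$, and your verification of that extra case is wrong. Item 1 constrains the concept $C$ in a term $x_C R_\Diamond b$ (it must be $\Diamond$-leading); it places no restriction on $b$. Your argument only checks that $b$ is not of the form $\Diamond b'$, which is beside the point. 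In fact the lemma is \emph{false} for the extended completion: if $\mathcal{A}$ contains $b:[R_\Box]D$ with $D$ atomic, the creation rule yields $x_D::D$, rule $\Box$ yields $b\, R_\Box\, x_D$, and the new rule then adds $x_D\, R_\Diamond\, b$ --- a term of the shape $x_C R_\Diamond b$ with $C=D$ atomic, hence not $\Diamond$-leading. The paper avoids this by proving Lemma~\ref{lem:Box-Diamond separation} for the \emph{plain} LE-$\mathcal{ALC}$ completion only, and then handling the interaction with the new rule separately via Corollary~\ref{cor:box-diamond separation}, which shows that adding a term $y R_\Diamond b$ can never generate a new $R_\Box$-term; the termination argument only needs the invariants on the plain completion plus that transfer step. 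You should restrict your induction to the original rules and add that separate corollary rather than trying to make the invariants survive $SA(R_\Box,R_\Diamond,b)$ itself.

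A smaller issue: in your discussion of item 5 you assert that $\Diamond b\, I\, \Box y$ ``could only come from the basic rule,'' but the adjunction rules also introduce $I$-terms ($b\, I\, \Box y$ and $\Diamondblack b\, I\, y$ from $b R_\Box y$, and dually), so those cases must be excluded too --- using items 6 and 7 of the induction hypothesis, exactly as you do when arguing items 6 and 7 themselves. Since your induction is simultaneous and rule-by-rule, a careful execution would catch this, but as written the case split for item 5 is incomplete.
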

\begin{proof}

The proof follows by a simultaneous induction on the number of applications of the expansion rules. The proof for base case is obvious as $\mathcal{A}$ does not contain individual name of the form $\Diamond b$ or $\Box y$. We give the proof for all inductive cases now.

\textbf{Creation rule:} Only terms added by this rule are of the form $a_C:C$ or $x_C::C$ for some $C \in \mathcal{A}$. For terms of both of these types,  all the items in lemma hold trivially.  

\textbf{Basic rule:} In this case, we add term $b I y$ from terms $b:C$ and $y::C$. We only need to consider the following cases: (1) $b$ is of the form $\Diamond d$, and $y$ is of the form $x_{C'}$ for some $C'$. By induction item 1, $C$ is $\Diamond$-leading. Hence, by induction item 4, $C'$ is also $\Diamond$-leading. Therefore, the new term $\Diamond d I x_{C'}$ also satisfies item 1. (2) $b$ is of the form $a_{C'}$ for some $C'$, and $y$ is of the form $\Box z$. By induction item 2, $C$ is $\Box$-leading. Hence, by induction item 3, $C'$ is also $\Box$-leading. Therefore, the new term  $a_{C'} I \Box z$ also satisfies item 2.  (3) $b$ is of the form $a_{C_1}$, and $y$ is of the form $x_{C_2}$. In this case, if $C_1$ (resp.~$C_2$) is $\Diamond$-leading (resp.~$\Box$-leading), then by induction item 4 (resp.~item 3) $C$ would be be $\Diamond$-leading (resp.~$\Box$-leading). 
By again applying the same items, we would get $C_2$ (resp.~$C_1$) is $\Diamond$-leading (resp.~$\Box$-leading). Therefore, the added term $a_{C_1} I x_{C_2}$ satisfies items 3 and 4. Item 5 is satisfied, since if any of these terms is of the form $\Diamond b I \Box y$, then both $\Diamond b:C$, and $\Box y::C$ appear in $\overline{\mathcal{A}}$. By induction items 1 and 2 $C$ must be both $\Box$-leading and ~$\Diamond$-leading. However, no such concept exists. Item 7  is satisfied as this rule does not add new individual names.

\textbf{Rules $\wedge_A$ and $\vee_X$:} We only give the proof for $\wedge_A$, the proof for $\vee_X$ is dual. In this case, we add terms $b:C_1$, and $b:C_2$ from term $b:C_1 \wedge C_2$. We need to consider the following cases: (1) $b$ is of form $\Diamond d$. By induction item 1, $C_1 \wedge C_2$ is $\Diamond$-leading. Therefore, both $C_1$ and $C_2$ must be $\Diamond$-leading. Hence, the newly added terms $b:C_1$ and $b:C_2$ also satisfy item 1. (2) $b$ is of the form $a_{C}$. We have to show items 3 and 4 hold. If $C$ is $\Diamond$-leading, then by induction item 4, $C_1 \wedge C_2$ is $\Diamond$-leading, which implies that both $C_1$ and $C_2$ are $\Diamond$-leading. Hence, the added terms $b:C_1$ and $b:C_2$ satisfy item 4. To show item 3 holds for the new  terms,  w.l.o.g.~suppose $C_1$ is $\Box$-leading. Then, by def. $C_1 \wedge C_2$ is $\Box$-leading as well. Therefore, by induction item 3, $C$ is $\Box$-leading. We can similarly show  $C$ is $\Box$-leading, when $C_2$ is $\Box$-leading.   Item 7  is satisfied as this rule does not add new individual names.

\textbf{Rules $\Box$ and $\Diamond$:}  We only give the proof for $\Box$, the proof for $\Diamond$ is dual. In this case, we add term of the form $b R_\Box y$ from terms $b:[R_\Box]C$ and $y::C$.  By induction item 6, $b$ can not be of the form $\Diamond d$. If $b$ is of the form $a_{C'}$ for some $C'$, then by induction item 3, $C'$ is $\Box$-leading. Hence, the added term $a_{C'} R_\Box y$ satisfies item 2. It also satisfies item 6 because $C'$ being $\Box$-leading can not have $\Diamond$ as the outermost connective.  Item 7  is satisfied as this rule does not add new individual names.

\textbf{Rules $\Box y$, $\blacksquare y$, $\Diamond b$, and $\Diamondblack b$:}  We only give the proof for $\Box y$, the proofs for other rules are similar. In this case, we add term of the form $b R_\Box y$ from term $b I \Box y$. By induction item 6, $b$ can not be of the form $\Diamond d$. If $b$ is of the form $a_C$ for some $C$, then by induction item 2, $C$ must be $\Box$-leading. Therefore, the added term $a_C R_\Box y$ satisfies item 2. It also satisfies item 6, since  $C$ being $\Box$-leading, can not have $\Diamond$ as the outermost connective.  Item 7  is satisfied as this rule does not add new individual names.

\textbf{Rules $\wedge_A^{-1}$ and $\vee_X^{-1}$:}  We only give the proof for $\wedge_A^{-1}$, the proof for $\vee_X^{-1}$ is dual. In this case, we add term of the form $b:C_1 \wedge C_2$ from terms of the form $b:C_1$ and $b:C_2$. We need to consider the following cases: (1) 
$b$ is of the form $\Diamond d$, then by induction item 1, $C_1$ and $C_2$ are $\Diamond$-leading. Then, by def.~$C_1 \wedge C_2$ is also $\Diamond$-leading. Therefore, the new term $b:C_1 \wedge C_2$ satisfies item 1. (2) $b$ is of the form $a_{C}$. We have to show items 3 and 4 hold. If $C$ is $\Diamond$-leading, then by induction item 4, both $C_1$ and $C_2$ are $\Diamond$-leading, which implies that $C_1 \wedge C_2$ is $\Diamond$-leading. Hence, new term $b:C_1 \wedge C_2$ satisfies item 4. To show item 3 holds for the new  term, suppose $C_1 \wedge C_2$ is $\Box$-leading. Then, $C_1$ is $\Box$-leading or $C_2$ is $\Box$-leading. Therefore, by induction item 3, $C$ is $\Box$-leading. It also satisfies item 6, since $C$ being $\Box$-leading, can not have $\Diamond$ as the outermost connective.   Item 7  is satisfied as this rule does not add new individual names.

\textbf{Rules $R_\Box$ and  $R_\Diamond$:} We only the give proof for $R_\Box$, the proof for $R_\Diamond $ is dual. In this case, we add terms $b I \Box y$, and $\Diamondblack b I y$ from $b R_\Box y$. 
By induction item 5, $b$ can not be of the form $\Diamond d$. If $b$ is of the form $a_C$ for some $C$, then by induction item 2, $C$ must be $\Box$-leading. Therefore, the added terms $a_C I \Box y$ and $\Diamondblack a_C I y$ satisfy item 2. As  $b$ can not be of the form $\Diamond d$, the possibly  new constant $\Diamondblack b$ is not of the form $\Diamondblack\Diamond d$ for any $d$. Hence, item 3 is satisfied. 

\textbf{Rules $a_C$ and $x_C$:}  We only give the proof for $a_C$, the proof for $x_C$ is dual. In this case, we add term of the form $b:C$ from term $b I x_C$. We need to consider the following cases: (1) $b$ is of the form $\Diamond d$, then by induction item 1, $C$ must be $\Diamond$-leading. Therefore, the added term $b:C$ also satisfies item 1. (2) If $b$ is of the form $a_{C'}$ for some $C'$. We have to show items 3 and 4 hold. If $C'$ is $\Diamond$-leading, then by induction item 4, $C$ is $\Diamond$-leading. Hence, added term $a_{C'}:C$ satisfies item 4. To show item 3 holds for the new  term, w.l.o.g.~suppose $C$ is $\Box$-leading. Then, by induction item 3, $C'$ is $\Box$-leading. 

\end{proof}
As a corollary, we get the following result.

\begin{corollary}\label{cor:box-diamond separation}
  For any LE-$\mathcal{ALC}$ ABox $\mathcal{A}$, and any terms $d R_\Box z$, and $y R_\Diamond b$, if  $dR_\Box z \in \overline{\mathcal{A} \cup \{y R_\Diamond b\}}$, then  $d R_\Box z \in \overline{\mathcal{A}}$.  
\end{corollary}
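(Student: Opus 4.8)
The plan is to show that adjoining the single relational term $y R_\Diamond b$ to $\mathcal{A}$ causes the tableaux algorithm to derive no $R_\Box$-term beyond those already in $\overline{\mathcal{A}}$. Since $y R_\Diamond b$ introduces no concept names, the side conditions of the rules $\wedge_A^{-1}$ and $\vee_X^{-1}$ (``$C_1\wedge C_2\in\mathcal{A}$'', etc.) are the same for $\mathcal{A}$ and for $\mathcal{A}\cup\{y R_\Diamond b\}$; hence — using that the tableaux completion does not depend on the order in which applicable rules are chosen — the completion $\overline{\mathcal{A}\cup\{y R_\Diamond b\}}$ may be computed by first saturating $\mathcal{A}$ (yielding $\overline{\mathcal{A}}$) and only afterwards adding $y R_\Diamond b$ and exhausting the remaining applicable rules. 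Call the terms introduced in this second phase \emph{fresh}. The claim then reduces to: no fresh term has the form $d R_\Box z$, for then every $R_\Box$-term of $\overline{\mathcal{A}\cup\{y R_\Diamond b\}}$ already lies in $\overline{\mathcal{A}}$.

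I would establish this by induction on the length of the second phase, maintaining an invariant saying that every fresh term lies ``on the $R_\Diamond$ side'' — concretely, that each fresh term is of one of the forms $z R_\Diamond b$, $b\, I\, \blacksquare z$, $\Diamond b'\, I\, z$, $\Diamond b' : C$ (where $b'$ denotes $b$ possibly prefixed by one or more $\Diamond$'s), or a term obtained from these by propagation through $\wedge_A$, $\wedge_A^{-1}$, $I$, $\Diamond$, and the $I$-compatibility rules $\Diamond b$ and $\blacksquare z$ — and, crucially, is never an $R_\Box$-term. The base case is the seed $y R_\Diamond b$. For the inductive step I would walk through the rules: the $R_\Diamond$-adjunction rule sends $z R_\Diamond b$ to $\Diamond b\, I\, z$ and $b\, I\, \blacksquare z$; the rules $\Diamond b$ and $\blacksquare z$ send these back to terms $z R_\Diamond b$; the appending rule $x_C$ turns $\Diamond b'\, I\, x_C$ into $\Diamond b' : C$; and rules $\wedge_A$, $\wedge_A^{-1}$ and the basic rule $I$ merely propagate $\Diamond$-prefixed memberships to further terms of the listed shapes. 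The only rules that threaten to produce a fresh $R_\Box$-term are $\Box$ (from $\Diamond b' : [R_\Box]C$), $\Box y$ (from $\Diamond b'\, I\, \Box w$) and the $I$-compatibility rule $\Diamondblack b$ (from $\Diamondblack\Diamond b'\, I\, z$), and each of these is blocked by Lemma~\ref{lem:Box-Diamond separation} applied to the ABox $\mathcal{A}\cup\{y R_\Diamond b\}$ — which still satisfies the lemma's standing hypothesis that the input contains no $\Diamond$- or $\Box$-prefixed individual names: item~1 forces every concept annotating a $\Diamond$-prefixed individual to be $\Diamond$-leading, hence never of the form $[R_\Box]C$, so rule $\Box$ is inapplicable; item~5 forbids the premise $\Diamond b'\, I\, \Box w$ of rule $\Box y$; item~7 forbids the constant $\Diamondblack\Diamond b'$ required by rule $\Diamondblack b$; and item~6 forbids $\Diamond b'\, R_\Box w$ outright. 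One also notes that rule $a_C$ cannot act on the fresh term $b\, I\, \blacksquare z$, since $b$, being an individual name of $\mathcal{A}$, is not a classifying object $a_C$.

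With the invariant in hand, no fresh term has the form $d R_\Box z$, which is exactly the corollary. The step I expect to be the main obstacle is pinning down the invariant precisely enough to be preserved by every expansion rule while still matching the exact shapes to which Lemma~\ref{lem:Box-Diamond separation} (especially items~1, 5, 7) applies — including the bookkeeping for iterated $\Diamond$-prefixes; a subsidiary point is justifying the confluence of the rule set that licenses the ``saturate $\mathcal{A}$ first'' reordering used to define the fresh terms.
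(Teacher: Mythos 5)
Your proposal is correct and follows essentially the same route as the paper: trace the consequences of adjoining $y R_\Diamond b$ (the adjunction rule yielding $\Diamond b\, I\, y$ and $b\, I\, \blacksquare y$) and invoke Lemma~\ref{lem:Box-Diamond separation} to rule out any resulting $R_\Box$-term. Your version is merely a more systematic rendering — an explicit invariant, a rule-by-rule induction, and attention to rule-order/confluence — of the paper's three-sentence argument, which relies on the same items of that lemma.
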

\begin{proof}
For any term of the form $y R_\Diamond b$, the only rule that has it in premise is the adjunction rule $R_\Diamond$ which adds terms $\Diamond b I y$, and $b I \blacksquare y$. The term $b I \blacksquare y$ cannot lead to the addition of any other term. If the term $\Diamond b I y$ leads to the addition of a term of the form $d R_\Box y$, then it means that we must have $\Diamond b :\Box C \in \overline{\mathcal{A}}$, for some $C$ or $\Diamond b I \Box z \in \overline{\mathcal{A}}$  for some $z$. However, none of these is possible by the lemma \ref{lem:Box-Diamond separation}. 
\end{proof}

This lemma immediately implies the following modified version of Lemma \ref{lem:depth}. 

\begin{lemma} 
For any individual names $b$, and $y$, and concept $C$  added during tableau expansion of ${\mathcal{A}}$, 
\begin{equation}\label{eq:IH 4}
 \Box_{\mathcal{D}}(C) \leq  \Box_{\mathcal{D}}(\mathcal{A})+1 \, 
 \mbox{ and } \, \Diamond_{\mathcal{D}}(C) \leq  \Diamond_{\mathcal{D}}(\mathcal{A})+1,  
\end{equation}

\begin{equation}\label{eq:IH 5}
-\Diamond_{\mathcal{D}}(\mathcal{A} )-2 \leq \Diamond_{\mathcal{D}} (b) \, \mbox{ and } \,   \Box_{\mathcal{D}}(b) \leq  \Box_{\mathcal{D}}(\mathcal{A})+1, 
\end{equation}

\begin{equation}\label{eq:IH 6}
-\Box_{\mathcal{D}}(\mathcal{A})-1 \leq \Box_{\mathcal{D}}(y)\, \mbox{ and } \, \Diamond_{\mathcal{D}}(y) \leq  \Diamond_{\mathcal{D}}(\mathcal{A})+2
\end{equation}
\end{lemma}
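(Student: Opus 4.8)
The plan is to read this lemma off the ordinary modal-depth lemma (Lemma~\ref{lem:depth}, i.e.\ \cite[Lemma~1]{van2023non}) together with Corollary~\ref{cor:box-diamond separation}, which already carries the combinatorial weight. The extended calculus differs from the one in \cite{van2023non} only by the rule $SA(R_\Box,R_\Diamond,b)$, which rewrites a term $b R_\Box y$ into $y R_\Diamond b$ for the fixed object name $b$ occurring in $\mathcal{A}$; and the present bounds differ from those of Lemma~\ref{lem:depth} only in that the $\Diamond$-direction has been relaxed by one unit (the $-2$ in \eqref{eq:IH 5} replacing $-1$, and the $+2$ in \eqref{eq:IH 6} replacing $+1$). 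So it suffices to (i) re-run the induction behind Lemma~\ref{lem:depth} with the right-hand sides relaxed in this way and the extra case $SA(R_\Box,R_\Diamond,b)$ added, and (ii) argue that this single unit of slack is never exceeded, i.e.\ that the $SA$-rule cannot cascade.

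For (i) I would keep the five bookkeeping inequalities on the terms $b I y$, $b R_\Box y$, $y R_\Diamond b$, $b:C$, $y::C$ used in the proof of Lemma~\ref{lem:depth}, but with $\Diamond_{\mathcal{D}}(\mathcal{A})+2$ in place of $\Diamond_{\mathcal{D}}(\mathcal{A})+1$ on the $\Diamond$-side, and recheck each LE-$\mathcal{ALC}$ rule (already done in \cite{van2023non}) plus the new $SA$ case. For $SA(R_\Box,R_\Diamond,b)$, a term $y R_\Diamond b$ is added from $b R_\Box y$; since $b$ occurs in $\mathcal{A}$ we have $\Box_{\mathcal{D}}(b)=\Diamond_{\mathcal{D}}(b)=0$, so the inequality needed for $y R_\Diamond b$ collapses to $\Diamond_{\mathcal{D}}(y)+1\le\Diamond_{\mathcal{D}}(\mathcal{A})+2$, which follows from the bound on $y$ coming from the premise $b R_\Box y$ (by induction $\Diamond_{\mathcal{D}}(y)\le\Diamond_{\mathcal{D}}(\mathcal{A})+1$), while the $\Box$-side is immediate because $b$ occurs in $\mathcal{A}$. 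One then also has to follow the adjunction and appending rules that fire on the consequences $\Diamond b I y$ and $b I \blacksquare y$ of $y R_\Diamond b$: the only new constant is $\Diamond b$, whose $\Diamond$-depth differs from that of $b$ by one, and this is exactly what forces the extra $-1$ in \eqref{eq:IH 5} and $+1$ in \eqref{eq:IH 6}.

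For (ii) the decisive point is Corollary~\ref{cor:box-diamond separation}: adding $y R_\Diamond b$ to an expanded ABox produces no fresh $R_\Box$-term. Hence the constant $\Diamond b$ never occurs in an $R_\Box$-term (Lemma~\ref{lem:Box-Diamond separation}(6)), is never turned into a constant $\Diamondblack\Diamond b$ (Lemma~\ref{lem:Box-Diamond separation}(7)), no term $\Diamond b I \Box z$ ever arises (Lemma~\ref{lem:Box-Diamond separation}(5)), and, crucially, the $SA$-rule cannot be re-triggered by anything reachable from $y R_\Diamond b$, because re-triggering requires a new $R_\Box$-term $b R_\Box y'$. So the net effect of $SA$ on top of the ordinary expansion is to add the terms $y R_\Diamond b$, $\Diamond b I y$, $b I \blacksquare y$ and the $\Diamond b:C$-type terms reachable from them by the appending and $\wedge_A/\vee_X$ rules, each of whose $\Diamond$-depth lies within one unit of the Lemma~\ref{lem:depth} bound; and the $\Box$-bounds \eqref{eq:IH 4} together with the $\Box$-halves of \eqref{eq:IH 5}--\eqref{eq:IH 6} are untouched, since $SA$ introduces neither an $R_\Box$-term nor a $[R_\Box]$-subconcept. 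Termination of the extended algorithm then follows as in \cite{van2023non}: the constants and subconcepts occurring in the completion are polynomially many in $|\mathcal{A}|$, hence so are the terms, and there is no branching.

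I expect the main obstacle to be the bookkeeping in step (i): verifying that \emph{every} term reachable from $\Diamond b I y$ and $b I \blacksquare y$ — in particular those produced by the appending rules $x_C$, $a_C$ and by $\wedge_A$, $\vee_X$, $\wedge_A^{-1}$, $\vee_X^{-1}$ — still satisfies the relaxed inequalities, which is precisely where the $\Diamond$-leading/$\Box$-leading classification of Lemma~\ref{lem:Box-Diamond separation} is needed to exclude the offending combinations (e.g.\ a $\Box$-leading concept in the description slot attached to $\Diamond b$). A secondary point requiring care is the precise invocation of Corollary~\ref{cor:box-diamond separation} inside an \emph{already expanded} completion rather than a raw ABox; since $\overline{\mathcal{A}}$ is itself closed under the ordinary rules, this amounts to applying the corollary to $\overline{\mathcal{A}}\cup\{y R_\Diamond b\}$, but it should be spelled out. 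Everything beyond this is a routine re-run of the termination argument of \cite{van2023non}.
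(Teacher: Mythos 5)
Your proposal is correct and follows essentially the same route as the paper: relax the five bookkeeping invariants of Lemma~\ref{lem:depth} by one unit on the $\Diamond$-side, verify the new $SA(R_\Box,R_\Diamond,b)$ case locally (using $\Box_{\mathcal{D}}(b)=\Diamond_{\mathcal{D}}(b)=0$ for $b$ occurring in $\mathcal{A}$), and invoke Corollary~\ref{cor:box-diamond separation} to guarantee that the premise $b R_\Box y$ already lies in the ordinary completion and that the rule cannot cascade. The only cosmetic difference is that the paper does not relax \emph{every} $\Diamond$-side bound uniformly (items 2 and 5 keep $+1$, while item 4's second inequality becomes $\leq 1$), a detail your sketch glosses over but explicitly flags for rechecking.
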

\begin{proof}

The proof proceeds by showing that the following stronger claim holds. For any tableaux expansion $ \overline{\mathcal{A}}$, obtained from $\mathcal{A}$ after any finite number of expansion steps:

1.~For any term $b I y \in \overline{\mathcal{A}}$, $\Box_{\mathcal{D}} (b) -  \Box_{\mathcal{D}} (y) \leq \Box_{\mathcal{D}}(\mathcal{A})+1$, and $ \Diamond_{\mathcal{D}} (y) -  \Diamond_{\mathcal{D}} (b) \leq \Diamond_{\mathcal{D}}(\mathcal{A})+2$.

2.~For any term $b R_\Box y \in \overline{\mathcal{A}}$, $\Box_{\mathcal{D}} (b) +1 -  \Box_{\mathcal{D}} (y) \leq \Box_{\mathcal{D}}(\mathcal{A})+1$, and $ \Diamond_{\mathcal{D}} (y) -  \Diamond_{\mathcal{D}} (b) \leq \Diamond_{\mathcal{D}}(\mathcal{A})+1$.

3.~For any term $y R_\Diamond b \in \overline{\mathcal{A}}$, $\Box_{\mathcal{D}} (b) -  \Box_{\mathcal{D}} (y) \leq \Box_{\mathcal{D}}(\mathcal{A})+1$, and $ \Diamond_{\mathcal{D}} (y) +1  -  \Diamond_{\mathcal{D}} (b) \leq \Diamond_{\mathcal{D}}(\mathcal{A})+2$.

4.~For any term $b:C \in \overline{\mathcal{A}}$, $ \Box_{\mathcal{D}} (b)  +  \Box_{\mathcal{D}} (C) \leq \Box_{\mathcal{D}}(\mathcal{A})+1$, and $ -\Diamond_{\mathcal{D}} (b) -  \Diamond_{\mathcal{D}} (C) \leq 1$.

5.~For any term $y::C \in \overline{\mathcal{A}}$, $ -\Box_{\mathcal{D}} (y)  - \Box_{\mathcal{D}} (C) \leq 0$, and $ \Diamond_{\mathcal{D}} (y) +  \Diamond_{\mathcal{D}} (C) \leq \Diamond_{\mathcal{D}}(\mathcal{A})+1$.

The proof relies on the idea that the new rule  $SA(R_\Box, R_\Diamond, b)$  introduces a new term of the form $y R_\Diamond b$ from a term $b R_\Box y$. However, by Corollary \ref{cor:box-diamond separation}
the term  $b R_\Box y$  must belong to the LE-$\mathcal{ALC}$ completion of $\mathcal{A}$. Hence, it satisfies Condition 2 by Lemma \ref{lem:depth}. The proof for all other conditions follows by a straightforward generalization of the proof of \cite[Lemma 1]{van2023non}. 
\end{proof}

\section{Models for example knowledge base}\label{App:Models for example knowledge base} 
The knowledge base in Section~\ref{sec:examples} is given by $\mathcal{K} =\mathcal{K}_1 \cup \mathcal{K}_2$, where the initial source database $\mathcal{K}_1=(\mathcal{A}_1,\mathcal{T}_1)$ is given by

\begin{multicols}{2}

\begin{align*}
    \mathcal{A}_1=\{& m_4:IM, \neg(m_4 I x_\emptyset), x_{\emptyset} :: FM\wedge IM,\\ & x_{\emptyset}::GM\wedge IM, f_1::GM, f_2::FM,  \\ & f_4::DM, m_3:RDM, m_3 I f_6, m_1 I f_3, \\ & \neg(m_1 I f_2), \neg(m_2:EUM)\},
\end{align*}
\columnbreak

\begin{align*}
    \mathcal{T}_1=\{EUM&\equiv GM\vee FM,\\
    RDM&\equiv RM\wedge DM, \\ IM&\sqsubseteq EUM\},
\end{align*}

\end{multicols}

\noindent while the database from users of two different groups $\mathcal{K}_2 = (\mathcal{A}_2,\mathcal{T}_2)$ is given by

\begin{multicols}{2}
\begin{align*}
    \mathcal{A}_2=\{& m_3 {R_\Box}_1 f_3, m_3 {R_\Box}_2 f_3,\neg(m_1 {R_\Box}_1 f_6),\\ & f_3 {R_\Diamond}_1 m_3, f_3 {R_\Diamond}_2 m_3, m_3: \Box_2 RDM, \\ & f_5::\Diamond_1 RM, \neg(m_1 {R_\Box}_2 f_5)\},
\end{align*}
\columnbreak

\begin{align*}
    \mathcal{T}_2=\{FDM &\equiv \Box_1 DM \wedge \Box_2 DM\}.
\end{align*}

\end{multicols}

By unraveling TBoxes we get the following terms:
\begin{enumerate}
    \item $EUM\equiv GM\vee FM$
    \item $RDM\equiv RM\wedge DM$
    \item $IM\equiv (GM\vee FM)\wedge C$ for some $C$ not appearing in $\mathcal{K}$
    \item $FDM \equiv \Box_1 DM \wedge \Box_2 DM$
\end{enumerate}

Note that the terms $FM\wedge IM$ and $GM\wedge IM$ in $\mathcal{A}_1$ are denoted as follows.
\begin{enumerate}
    \item $FM\wedge IM\equiv FM\wedge ((GM\vee FM)\wedge C)\equiv FM\wedge C$
    \item $GM\wedge IM\equiv GM\wedge ((GM\vee FM)\wedge C)\equiv GM\wedge C$
\end{enumerate}


We denote the objects and features in the model  of the form $a_C$, and $x_C$ as below:

\smallskip
{{
\centering
\begin{tabular}{|c|c|c|c|}
\hline
$a_1$  & $a_{GM}$ &  $x_1$  & $x_{GM}$\\
$a_2$  & $a_{FM}$ &  $x_2$  & $x_{FM}$ \\
$a_3$  & $a_{GM\vee FM}$ & $x_3$ & $x_{GM\vee FM}$  \\
$a_4$  & $a_{RM}$ &    $x_4$  & $x_{RM}$\\
$a_5$  & $a_{DM}$ &    $x_5$  & $x_{DM}$ \\
$a_6$  & $a_{RM\wedge DM}$ & $x_6$  & $x_{RM\wedge DM}$  \\
$a_7$ & $a_{C}$ & $x_7$ & $x_{C}$\\
$a_8$ & $a_{(GM\vee FM)\wedge C}$ & $x_8$ & $x_{(GM\vee FM)\wedge C}$\\
$a_9$ & $a_{FM\wedge C}$ & $x_9$ & $x_{FM\wedge C}$\\
$a_{10}$  & $a_{GM\wedge C}$ &    $x_{10}$  & $x_{GM\wedge C}$\\
$a_{11}$  & $a_{\Box_1 DM}$ &    $x_{11}$  & $x_{\Box_1 DM}$\\
$a_{12}$  & $a_{\Box_2 DM}$ &    $x_{12}$  & $x_{\Box_2 DM}$\\
$a_{13}$ & $a_{\Box_1 DM\wedge \Box_2 DM}$ & $x_{13}$ & $x_{\Box_1 DM\wedge \Box_2 DM}$\\
$a_{14}$ & $a_{\Box_2 (RM\wedge DM)}$ &    $x_{14}$  & $x_{\Box_2 (RM\wedge DM)}$\\
$a_{15}$  & $a_{\Diamond_1 RM}$ &    $x_{15}$  & $x_{\Diamond_1 RM}$\\
$a_{16}$  & $a_{\top}$ &    $x_{16}$  & $x_{\bot}$\\ 
 &  & $x_{17}$ & $x_\emptyset$\\
\hline
\end{tabular}
\par
}}
\smallskip

We give the following table depicting all the objects and features appearing in the model and whether or not they are related by $I$.  

\begin{table}[ht]
\caption{Objects ($A$) and features ($X$) of the model and Relation $I$ between them }
\label{tab:I relation} 
\centering
 \small{
 
    \begin{tabular}{|c|cccccccccccccccccc|}
\hline 
& $x_1$ & $x_2$ & $x_3$ & $x_4$ & $x_5$ & $x_6$ & $x_7$ & $x_8$ & $x_9$ & $x_{10}$ & $x_{11}$ &$x_{12}$ & $x_{13}$  & $x_{14}$ & $x_{15}$ & $x_{16}$ & $\blacksquare_1 x_{15}$ & $\Box_2 x_4$\\
\hline 
 $a_1$ & 1 & 0 & 1 & 0 & 0 & 0 &0 &0& 0& 0& 0& 0&0&0 & 0 &0 & 0 &0 \\
 $a_2$  & 0 & 1 & 1 & 0& 0 & 0 &0 &0& 0& 0& 0& 0&0&0 & 0 &0 & 0 &0\\
  $a_3$ & 0 & 0 & 1 & 0 & 0 & 0 &0 &0& 0& 0& 0& 0&0&0 & 0 &0 & 0 &0\\
 $a_4$  & 0 & 0 & 0 & 1 & 0 & 0 &0 &0& 0& 0& 0& 0&0&0 & 0 &0 & 1 &0\\
  $a_5$  & 0 & 0 & 0 & 0 & 1 & 0 &0 &0& 0& 0& 0& 0&0&0 & 0 &0 & 0 &0\\
 $a_6$   & 0 & 0 & 0 & 1 & 1 & 1&0 &0& 0& 0& 0& 0&0&0 & 0 &0 & 1 &0\\
 $a_7$  & 0 & 0 & 0 & 0 & 0 & 0 &1 &0& 0& 0& 0& 0&0&0 & 0 &0 & 0 &0\\  
 $a_8$  & 0 & 0 & 1 & 0 & 0 & 0 &1 &1& 0& 0& 0& 0&0&0 & 0 &0 & 0 &0\\   
  $a_9$  & 0 & 1 & 0 & 0 & 0 & 0 &1 &0& 1& 0& 0& 0&0&0 & 0 &0 & 0& 0 \\
 $a_{10}$   & 1 & 0 & 0 & 0 & 0 & 0 &1 &0& 0& 1& 0& 0&0&0 & 0 &0 & 0 & 0\\
  $a_{11}$  & 0 & 0 & 0 & 0 & 0 & 0 &0 &0& 0& 0& 1& 0&0&0 & 0 &0 & 0 & 0\\
  $a_{12}$  & 0 & 0 & 0 & 0 & 0 & 0 &0 &0& 0& 0& 0& 1&0& 0& 0 &0 & 0 &0\\
 $a_{13}$  & 0 & 0 & 0 & 0 & 0 & 0 &0 &0& 0& 0& 1& 1&1&0 & 0 &0 & 0& 0 \\
  $a_{14}$  & 0 & 0 & 0 & 0 & 0 & 0 &0 &0& 0& 0& 0& 1&0 &1 & 0 &0 & 0 &1\\
  $a_{15}$  & 0 & 0 & 0 & 0 & 0 & 0 &0 &0& 0& 0& 0& 0&0&0 & 1 &0 & 0 & 0\\
  $a_{16}$  & 0 & 0 & 0 & 0 & 0 & 0 &0 &0& 0& 0& 0& 0&0&0 & 0 &0 & 0 &0 \\
  
  $\Diamondblack_1 a_{11}$   & 0 & 0 & 0 & 0 & 1 & 0 &0 &0& 0& 0& 0& 0&0&0 & 0 &0 & 0 & 0\\
  $\Diamondblack_2 a_{12}$   & 0 & 0 & 0 & 0 & 1 & 0 &0 &0& 0& 0& 0& 0&0&0 & 0 &0 & 0 & 0\\
  $\Diamondblack_1 a_{13}$   & 0 & 0 & 0 & 0 & 1 & 0 &0 &0& 0& 0& 0& 0&0&0 & 0 &0 & 0 & 0\\
  $\Diamondblack_2 a_{13}$ & 0 & 0 & 0 & 0 & 1 & 0 &0 &0& 0& 0& 0& 0&0&0 & 0 &0 & 0 & 0\\
  $\Diamondblack_2 a_{14}$ & 0 & 0 & 0 & 1 & 1 & 1 &0 &0& 0& 0& 0& 0&0&0 & 0 &0 & 0 & 0\\
  $\Diamond_1 a_6$  & 0 & 0 & 0 & 0 & 0 & 0 &0 &0& 0& 0& 0& 0&0&0 & 1 &0 & 0 & 0\\

$m_1$   & 0 & 0 & 0 & 0 & 0 & 0 &0 &0& 0& 0& 0& 0&0&0 & 0 &0 & 0 & 0\\
$m_2$   & 0 & 0 & 0 & 0 & 0 & 0 &0 &0& 0& 0& 0& 0&0&0 & 0 &0 & 0 & 0\\
$m_3$   & 0 & 0 & 0 & 1 & 1 & 1 &0 &0& 0& 0& 0& 1&0&1 & 0 &0 & 1 & 1\\
$\Diamondblack_1 m_3$  & 0 & 0 & 0 & 0 & 0 & 0 &0 &0& 0& 0& 0& 0&0&0 & 0 &0 & 0 & 0\\
$\Diamondblack_2 m_3$  & 0 & 0 & 0 & 1 & 1 & 1 &0 &0& 0& 0& 0& 0&0&0 & 0 &0 & 0 & 0\\
$\Diamond_1 m_3$  & 0 & 0 & 0 & 0 & 0 & 0 &0 &0& 0& 0& 0& 0&0&0 & 1 &0 & 0 & 0\\
$\Diamond_2 m_3$  & 0 & 0 & 0 & 1 & 1 & 1 &0 &0& 0& 0& 0& 0&0&0 & 0 &0 & 0 & 0\\
$m_4$  & 0 & 0 & 1 & 0 & 0 & 0 &1 &1& 0& 0& 0& 0&0&0 & 0 &0 & 0 & 0\\

 \hline
\end{tabular}
}
\end{table}

\begin{table}[ht]
    \centering
 \small{
\begin{tabular}{|c|ccccccccccccc|}
\hline
& $f_1$ & $f_2$  & $f_3$ & $f_4$ & $f_5$  & $f_6$ & $\blacksquare_1 f_3$ & $\blacksquare_2 f_3$ & $\Box_1 f_3$ &  $\Box_2 f_3$ & $\blacksquare_1 f_5$ & $\Box_2 f_4$ & $x_{17}$\\ 
\hline
$a_1$  &1 & 0 & 0 &0 & 0 & 0 &0&0&0&0&0&0&0\\
$a_2$ &0 & 1& 0&0& 0 & 0&0&0&0&0&0&0&0\\
$a_3$ &0 & 0 & 0&0& 0 & 0&0&0&0&0&0&0&0\\
$a_4$  &0 & 0 & 0 &0& 0 & 0&0&0&0&0&1&0&0\\
$a_5$ &0 & 0 & 0&1& 0 & 0&0&0&0&0&0&0&0\\
$a_6$  &0 & 0 & 0&1& 0 & 0&0&0&0&0&1&0&0\\
$a_7$ &0 & 0 & 0&0& 0 & 0&0&0&0&0&0&0&0\\  
$a_8$ & 0 & 0 & 0&0& 0 & 0&0&0&0&0&0&0&0\\   
$a_9$ &0 & 0 & 0&0& 0 & 0&0&0&0&0&0&0&1\\
$a_{10}$ &0 & 0 & 0&0& 0 & 0&0&0&0&0&0&0&1\\
$a_{11}$ & 0 & 0 & 0&0& 0 & 0&0&0&0&0&0&0&0\\
$a_{12}$ &0 & 0 & 0&0& 0 & 0&0&0&0&0&0&0&0\\
$a_{13}$ &0 & 0 & 0&0& 0 & 0&0&0&0&0&0&0&0\\
$a_{14}$ &0 & 0 & 0&0& 0 & 0&0&0&0&0&0&1&0\\
$a_{15}$ &0 & 0 & 0&0& 1 & 0&0&0&0&0&0&0&0\\
$a_{16}$ &0 & 0 & 0&0& 0 & 0&0&0&0&0&0&0&0\\
  
$\Diamondblack_1 a_{11}$ &0 & 0 & 0 &0& 0 & 0&0&0&0&0&0&0&0\\
$\Diamondblack_2 a_{12}$ &0 & 0 & 0 &0& 0 & 0&0&0&0&0&0&0&0\\
$\Diamondblack_1 a_{13}$ &0 & 0 & 0 &0& 0 & 0&0&0&0&0&0&0&0\\
$\Diamondblack_2 a_{13}$ &0 & 0& 0&0& 0 & 0&0&0&0&0&0&0&0\\
$\Diamondblack_2 a_{14}$ &0 & 0& 0&1& 0 & 0&0&0&0&0&0&0&0\\
$\Diamond_1 a_6$  & 0 & 0 & 0 & 0 & 1 & 0 &0 &0& 0& 0& 0&0&0\\
$m_1$ & 0 & 0 & 1&0& 0 & 0&0&0&0&0&0&0&0\\
$m_2$ & 0 & 0 & 0&0& 0 & 0&0&0&0&0&0&0&0\\
$m_3$ &0 & 0 & 0 & 1 & 0 & 1&1&1&1&1&1&1&0\\
$\Diamondblack_1 m_3$ &0 & 0 & 1 & 0 & 0 & 0&0&0&0&0&0&0&0\\
$\Diamondblack_2 m_3$  &0 & 0 & 1 & 1 & 0 & 0&0&0&0&0&0&0&0\\
$\Diamond_1 m_3$  &0 & 0 & 1 & 0 & 1 & 0&0&0&0&0&0&0&0\\
$\Diamond_2 m_3$  &0 & 0 & 1 & 0 & 0 & 0&0&0&0&0&0&0&0\\
$m_4$ &0 & 0 & 0&0& 0 & 0&0&0&0&0&0&0&0\\
\hline
\end{tabular}
}

\end{table}
\smallskip

The relations ${R_\Box}_i$, and ${R_\Diamond}_i$ are given as follows. We have  ${R_\Box}_1 = \{(a_{11},x_5), (a_{13}, x_5), (m_3,f_3))\} $, and ${R_\Box}_2 = \{(a_{12},x_5), (a_{13}, x_5), (a_{14}, x_4), (a_{14}, x_5), (a_{14}, x_6),(a_{14},f_4),(m_3,x_4), (m_3,x_5), (m_3,x_6), (m_3,f_3),(m_3,f_4)\}$. ${R_\Diamond}_1 = \{(f_5,a_4),(f_5,a_6),(f_3,m_3),(f_5,m_3),(x_{15},a_6),(x_{15},a_4),(x_{15},m_3)\}$ and ${R_\Diamond}_2 = \{(f_3,m_3)\}$. The model contains atomic concepts $GM$, $FM$,  $RM$, $DM$, and $C$. For any of these concepts $D$, its interpretation is given  by the tuple $(x_D^\downarrow, a_D^\uparrow)$.

\end{document}